\documentclass[11pt]{article}
\usepackage[final]{epsfig}
\usepackage[left=3cm,right=3cm, top=2.5cm,bottom=2.5cm,bindingoffset=0cm]{geometry}
\usepackage{graphics}
\usepackage{amsmath}
\usepackage{amsfonts}
\usepackage{latexsym}
\usepackage{amssymb, mathabx}
\usepackage{amsthm}
\usepackage{graphicx}
\usepackage{epstopdf}
\usepackage{multicol,multirow}
\usepackage{hyperref, enumitem}
\usepackage{mathtools}

\usepackage{mathrsfs}

 \usepackage{relsize}

\usepackage[nottoc]{tocbibind}

\usepackage{amsthm}

\usepackage[latin1]{inputenc}
\usepackage{tikz}
\usetikzlibrary{shapes,arrows,decorations.pathmorphing}
\usepackage{float}


\makeatletter
\def\thmhead@plain#1#2#3{%
  \thmname{#1}\thmnumber{\@ifnotempty{#1}{ }\@upn{#2}}%
  \thmnote{ {\the\thm@notefont#3}}}
\let\thmhead\thmhead@plain

\makeatother

\newcounter{AppCounter}

\def\restrict#1{\raise-.5ex\hbox{\ensuremath|}_{#1}}

\newtheorem{lemma}{Lemma}[section]
\newtheorem{proposition}[lemma]{Proposition}

\newtheorem{remark-definition}[lemma]{Remark-Definition}
\newtheorem{theorem}[lemma]{Theorem}
\newtheorem{corollary}[lemma]{Corollary}

\newtheorem{proposition-conjecture}[lemma]{Proposition-conjecture}

\theoremstyle{definition}
\newtheorem{example}[lemma]{Example}

\newtheorem{definition}[lemma]{Definition}
\newtheorem{remark}[lemma]{Remark}


\newcommand{\marginnote}[1]
{
}

\newcounter{cy}

\newcounter{bk}
\newcommand{\bk}[1]
{\stepcounter{bk}$^{\bf BK\thebk}$%
\footnotetext{\hspace{-3.7mm}$^{\blacksquare\!\blacksquare}$
{\bf BK\thebk:~}#1}}

\sloppy

\title{Averaging, symplectic reduction, and central extensions}

\author{Cheng Yang\thanks{Department of Mathematics and Statistics, McMaster University, Hamilton, ON L8S 4K1, Canada, and the Fields Institute for Research in Mathematical Sciences, Toronto, ON M5T 3J1, Canada; 
 e-mail: \tt{yangc74@math.mcmaster.ca}}~
  and Boris Khesin\thanks{Department of Mathematics, University of Toronto, Toronto, ON M5S 2E4, Canada; e-mail:  \tt{khesin@math.toronto.edu}
  }}



\date{Revised: October 2019}

\begin{document}

\maketitle
\begin{abstract}
We show that the averaged equation for a one-frequency fast-oscillating Hamiltonian system
is the result of symplectic reduction of a certain natural  system on the corresponding $S^1$-bundle
with respect to the circle action.
Furthermore, if the reduced configuration space happens to be a group, then under natural assumptions
the averaged system turns out to be the Euler equation on a central extension of that group. This gives a new explanation of the drift, common in averaged system, as a similar shift is typically present in symplectic reductions and central extensions.
\end{abstract}

\tableofcontents

\section{Introduction} \label{intro}

Dynamical systems with fast-oscillating conditions are ubiquitous in physics: they  naturally arise in mechanics,
astrophysics, fluid and air dynamics, and many other domains. They often exhibit surprising  properties, a
beautiful example of which is the inverted pendulum, which stabilizes via
fast vibration of its pivot. The standard way of analyzing such equations 
includes a procedure of constructing an averaged
system, whose solutions remain close to those of the original system for very long time 
(see e.g. \cite{kapi, land, ArnMM, arko}).

In many examples of Hamiltonian  one-frequency oscillating systems
one obtains an additional term, a drift in the averaged  equation. A similar drift (or shift) is observed
in  hydrodynamical-type systems, including the $\beta$-plane
equation in meteorology (see e.g. \cite{zeit}), infinite-conductivity equation for 
electron flows \cite{khch}, and the Craik-Leibovich equation for an ideal
fluid confined to a domain with oscillating boundary \cite{crle, vla}.
In those hydrodynamical systems
such a shift is often related to the consideration of a central extension of an appropriate Lie algebra \cite{viz}.

Below we explain this phenomenon by  building a general connection between the averaging method and symplectic 
reduction in appropriate, possibly nontrivial, $S^1$-bundles. Namely, one starts with
 symplectic reduction of the cotangent bundle over a circle action, which is one of the most studied objects
in symplectic geometry.
One observes two features for the reduction over a nonzero value of the momentum map: the appearance
of a twisted symplectic structure (similar to how the curvature 
arises in the description of gyroscopes on surfaces \cite{cole2}), 
where  a new magnetic term supplements the canonical symplectic form of the reduced space,
and the appearance of an amended potential function, see Section \ref{sec:reduction}. It turns out that exactly these two
 phenomena occur in the averaging procedure.  This can be summarized in the following statement (which is a combined version of Theorems \ref{thm:ave-red} and \ref{thm:Ave_natHamiltonian}):


\begin{figure}[H]
\tikzstyle{block} = [rectangle, draw, fill=blue!10,
    text width=6em, text centered, rounded corners, minimum height=6em]
\tikzstyle{line} = [draw, -latex']

\centerline{
\begin{tikzpicture}[node distance = 5.1cm,auto]
    \node [block] (SDE) {SDE (fast time=new space variable)};
    \node [block, right of=SDE] (tilSDE) {$\widetilde{SDE}$ (fibre-constant objects)};
    \node [block, below of=SDE] (aveDE) {$\overline{DE}_1$ average equation (over fast time)};
    \node [block, left of=aveDE] (DE) {DE with fast oscillation (fast time $\tau=\omega t$)};
    \node [block, below of=tilSDE] (redSDE) {$SDE_{red}$ (symplectic structure with magnetic term; effective potential)};
    \node [block, right of=redSDE, node distance=5cm] (Euler) {Euler equation on $\hat{\mathfrak{g}}^*$(central extension)};
    \path [line] (DE) -- node {suspension $\dot{\phi}=\omega$} (SDE) ;
    \path [line] (SDE) -- node [align=center]{space \\averaging \\for $S^1$-action} (tilSDE) ;
    \path [line,transform canvas={yshift=0.1em}]  (DE) -- node [align=center,below]{fast time\\averaging\\procedure}(aveDE);
    \path [ line, decorate, decoration=snake] (SDE) -- node [align=right,left] {Poincar\'e \\approximation\\theorem} (aveDE);
    \path [line] (tilSDE) -- node [align=right,left]{symplectic reduction  \\for a non-zero value \\of momentum map }(redSDE);
    \path [line,dashed] (tilSDE) -- node [align=right,xshift=-1.5em,yshift=-2em]{configuration space for \\$\widetilde{SDE}$ is the central \\extension group $\widehat G$, \\ reduction to $\hat{\mathfrak{g}}^*$ }(Euler);
    \path [line,dashed] (redSDE) -- node [align=center,below]{for group $G$\\ as the base,\\ Hamiltonian \\~ reduction to $\hat{\mathfrak{g}}^*$ }(Euler);
    \draw[implies-implies, double distance=0.5em](aveDE) -- (redSDE);
\end{tikzpicture}
}
\caption{DIAGRAM}	
\label{fig:diagram}
\end{figure}

\newpage 

\begin{theorem}
For a natural slow-fast Hamiltonian system  the resulting slow (averaged) system coincides with the one
obtained by space averaging over the fibers of an appropriate $S^1$-bundle and performing the symplectic reduction of the corresponding cotangent bundle over $S^1$-action at the momentum value related to the 
fast frequency. The averaged system turns out to be a natural Hamiltonian system with an amended potential function with respect to a twisted (magnetic) symplectic structure.
\end{theorem}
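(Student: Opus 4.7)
The plan is to work with a natural slow-fast Hamiltonian on $T^*Q$ of the form $H_\omega(q,p,t) = \tfrac{1}{2}\langle p,p\rangle_q + V(q) + U(q,\omega t)$, where $U$ is $2\pi$-periodic in the fast phase. First, I perform the suspension trick (the upper horizontal arrow of Figure~\ref{fig:diagram}): promote the fast time $\tau = \omega t$ to an independent configuration coordinate $\phi \in S^1$, yielding a time-independent Hamiltonian $\widetilde H$ on $T^*(Q \times S^1)$ that admits a free $S^1$-action by translation in $\phi$, with momentum map $J = p_\phi$. The constraint $\dot\phi = \omega$ on the original trajectories is encoded as fixing the level $J = \omega$.

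Next I carry out Marsden--Weinstein reduction of $(T^*(Q\times S^1),\widetilde H)$ at this nonzero level. By the standard cotangent-bundle reduction theorem, once a principal $S^1$-connection on $Q \times S^1 \to Q$ is chosen, the reduced space $J^{-1}(\omega)/S^1$ is symplectomorphic to $T^*Q$ equipped with the twisted symplectic form $\Omega_{\mathrm{can}} + \omega\,\pi^*B$, where $B$ is the curvature two-form of the connection, while the reduced Hamiltonian acquires an amended potential $V_\omega$ depending on $U$ and $\omega$. On the other side of the diagram, the classical Poincar\'e averaging procedure applied to the non-autonomous $H_\omega$ produces a slow Hamiltonian on $T^*Q$ whose potential is $V + \bar U$ together with first-order corrections responsible for the drift.

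The principal task, and main obstacle, is to match these two constructions. I would show that the $S^1$-connection on $Q \times S^1 \to Q$ may be chosen canonically so that its horizontal distribution is adapted to the decomposition of $U$ into its fiber average $\bar U$ and a fiber coboundary; then the curvature of this distinguished connection, scaled by $\omega$, equals precisely the drift term produced by averaging, and the amended potential $V_\omega$ coincides with the effective potential obtained from the averaging normal form. The technical difficulty is that in the averaging computation the drift arises as a \emph{secondary} correction, obtained by substituting the oscillating response of the fast variables back into the slow equation, while in reduction it arises intrinsically as a curvature; recognizing these as the same geometric object requires careful bookkeeping that relates the connection one-form to the fiber antiderivative of $U - \bar U$. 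Once this identification is in hand, the two Hamilton systems coincide and the commutativity of Figure~\ref{fig:diagram} is established, which proves the theorem.
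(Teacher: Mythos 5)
Your overall architecture (suspend the fast phase, reduce the cotangent bundle at a nonzero momentum value, match against the Poincar\'e normal form) is the same skeleton as the paper's, but two essential steps are missing or misplaced, and as written the construction would produce a trivial answer.

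First, the order of operations. You propose to ``carry out Marsden--Weinstein reduction of $(T^*(Q\times S^1),\widetilde H)$'' at the level $J=\omega$, but $\widetilde H=\omega I+H(q,p,\phi)$ is \emph{not} invariant under the $S^1$-translation in $\phi$ (that is the whole point of the fast oscillation), so $p_\phi$ is not conserved and $J^{-1}(\omega)$ is not an invariant set; the reduction is not defined. The paper's procedure is: first perform the \emph{space} averaging over the fibers --- of both the potential \emph{and the metric} --- to obtain a genuinely $\mathbb T$-invariant natural system, and only then reduce (Theorem \ref{thm:ave-red}); the comparison with fast-time averaging is a separate approximation statement (Theorem \ref{thm:Ave_natHamiltonian} and the corollary after Remark \ref{rm:Ave_natHamiltonian}), proved by exhibiting the two systems as $\epsilon^2$-close and invoking the classical one-frequency averaging theorem.

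Second, and more seriously, with your naive suspension the bundle $Q\times S^1\to Q$ carries the flat product metric, its mechanical connection is $d\phi$ with zero curvature, and the fiber inertia is constant; reduction then yields \emph{no} magnetic term and only the trivial average $\bar U$ of the potential --- none of the interesting drift or Kapitza-type effective potential. The missing idea is the identification of the suspension coordinate with the \emph{oscillating response} of the fast dynamics: a point $(\overline x,\phi)$ of the suspension is sent to $x=\overline x+\widetilde x(\overline x,\phi)$, where $\widetilde x$ is the zero-mean periodic solution of the fast subsystem, and it is the metric \emph{induced by this map} whose $\mathbb T$-average has a nonconstant fiber inertia operator $\mathbb I(\overline x)$ (producing the amended potential $\tfrac12\langle\mu,\mathbb I(\overline x)^{-1}\mu\rangle$) and a nonflat mechanical connection (producing the magnetic term). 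Your instinct to relate the connection to antiderivatives of $U-\bar U$ points in the right direction --- in the oscillating-potential example the cross term of the averaged metric is built from $\overline{S''V'}$ with $V,S$ the temporal antiderivatives (Theorem \ref{thm:Par_thm}) --- but this is not an ad hoc ``canonical choice of connection adapted to $U$''; it is forced, being the mechanical connection of the averaged induced metric. Moreover the two corrections have distinct sources: the amended potential comes from the fiber block of the averaged metric, the magnetic term from the curvature of its mechanical connection; your proposal tries to extract both from the connection alone, which cannot work (e.g.\ for the vibrating pendulum the connection is flat yet the effective potential correction is nonzero, cf.\ Theorem \ref{thm:Pen_thm1}).
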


Furthermore, central extensions appear whenever the base of the reduction turns out to be a group by itself, as
discussed in Section \ref{sect:cenExtention}. This can be regarded as a manifestation of the reduction by stages
developed in \cite{mars2}.
The second main result of the paper is the following abbreviated version of Theorem \ref{thm:group-ext}:

\begin{theorem}
If the slow manifold is a group $G$ and the perturbed Hamiltonian system is invariant relative to the $G$-action,
then the second reduction of such 
a fast oscillating system gives an Euler equation, Hamiltonian with respect to  the
Poisson-Lie bracket on a central extension $\widehat{\mathfrak g}$ of the corresponding Lie algebra $\mathfrak g$.
\end{theorem}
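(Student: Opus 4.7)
The plan is to combine the first theorem (Theorem \ref{thm:Ave_natHamiltonian}) with the classical correspondence between principal $S^1$-bundles with connection over a group and central extensions of that group, and then to invoke reduction by stages as suggested in the diagram. First I would apply Theorem \ref{thm:Ave_natHamiltonian} with base $G$: the averaged system lives on $T^*G$ equipped with a twisted symplectic form $\omega_{\rm can}+\pi^*\beta_c$ (canonical plus magnetic term) and an amended potential. The magnetic $2$-form $\beta_c$ on $G$ is the curvature, scaled by the momentum value $c$, of the $S^1$-connection on the bundle $P\to G$ whose total space serves as the configuration space of $\widetilde{SDE}$.

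Next I would identify $P$ with a central extension $\widehat G$ of $G$ by $S^1$. The $G$-invariance hypothesis means $G$ acts on $P$ by bundle automorphisms commuting with the $S^1$-action, so $P$ acquires a Lie group structure in which $S^1$ sits centrally and the quotient is $G$. The Lie-algebra $2$-cocycle $\sigma\in Z^2(\mathfrak g,\mathbb R)$ defining $\widehat{\mathfrak g}=\mathfrak g\oplus_\sigma\mathbb R$ is obtained from the curvature form evaluated at the identity of $G$. This identification is the conceptual heart of the argument, and it is precisely the dashed arrow in Figure \ref{fig:diagram} from $\widetilde{SDE}$ to the Euler equation on $\widehat{\mathfrak g}^*$.

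I would then perform the second, $G$-reduction of the magnetically twisted Hamiltonian system on $T^*G$. By $G$-invariance of the amended Hamiltonian, it descends to a function on $T^*G/G\cong\mathfrak g^*$; by a direct Poisson-geometric computation (of the type used in the theory of Kaluza-Klein and Wong equations) the reduced Poisson bracket on $\mathfrak g^*$ is the Lie-Poisson bracket twisted by the cocycle $c\,\sigma$. This twisted bracket is precisely the restriction of the standard Lie-Poisson bracket on $\widehat{\mathfrak g}^*$ to the affine hyperplane $\{\xi\in\widehat{\mathfrak g}^*\,:\,\langle\xi,\mathbf 1\rangle=c\}$, where $\mathbf 1$ spans the central $\mathbb R$-summand. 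Hence the reduced dynamics is an Euler equation on $\widehat{\mathfrak g}^*$. For a clean consistency check I would run the computation a second way: reduce $T^*\widehat G$ all at once by the full $\widehat G$-action (using $\widehat G$-invariance lifted from $G$-invariance together with central $S^1$-invariance) to obtain $\widehat{\mathfrak g}^*$ with its Lie-Poisson structure, and appeal to the reduction-by-stages theorem of \cite{mars2} to guarantee the two routes agree.

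The main obstacle I expect is Step 3, namely the precise matching of the magnetic $2$-form produced by the first symplectic reduction with the Lie-algebra cocycle that defines $\widehat{\mathfrak g}$. This requires verifying that the chosen connection has left-invariant curvature (which follows from $G$-invariance but requires care in the choice of connection), that its restriction to $\mathfrak g$ satisfies antisymmetry and the cocycle identity, and that the momentum shift by $c$ is exactly the translation that places the reduced flow on the correct affine coadjoint orbit in $\widehat{\mathfrak g}^*$. Once this identification is secured, the remaining steps are standard Marsden-Weinstein reduction and the cited reduction-by-stages result.
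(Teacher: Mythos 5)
Your overall route is the one the paper takes: the first reduction over the $S^1$-action produces the magnetic form $\beta_\mu=\pi_G^*\sigma_\mu$ on $T^*G$; the second (Poisson) reduction over the left $G$-action produces the bracket
$\{f,g\}_\mu(\nu)=-\bigl\langle\nu,[\delta f/\delta\nu,\delta g/\delta\nu]\bigr\rangle-\sigma_\mu(e)\bigl(\delta f/\delta\nu,\delta g/\delta\nu\bigr)$,
which the paper simply quotes from \cite{mars2} as Theorem \ref{thm:poisson}; and this bracket is recognized as the Lie--Poisson bracket of $\widehat{\mathfrak g}=\mathfrak g\oplus\mathfrak t$ with commutator twisted by $\sigma$, restricted to the level set of the central coordinate determined by $\mu$ --- exactly your affine-hyperplane remark. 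Your steps 1 and 3 therefore reproduce the paper's argument for the statement as literally formulated, which only involves the Lie algebra extension $\widehat{\mathfrak g}$, and your reduction-by-stages consistency check is the same observation the paper makes in passing.

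The genuine weak point is your step 2, the identification of the total space $P$ with a group $\widehat G$. You derive the group structure from the assertion that $G$ acts on $P$ by bundle automorphisms commuting with the $S^1$-action. Taken literally this fails: a lift of the left-translation action of $G$ to $P$ commuting with the structure group yields a global section $g\mapsto g\cdot p_0$, hence a trivialization $P\cong G\times S^1$, and the group structure this induces is the direct product, whose Lie algebra cocycle is zero --- precisely not the extension you need. The object that acts on $P$ lifting left translations is the (yet-to-be-constructed) group $\widehat G$ itself, not $G$, so the argument as written is circular. The paper's Theorem \ref{thm:group-ext} avoids this by building $\widehat G$ directly \`a la Pressley--Segal: elements are triples $(g,u,p)$ of a group element, a circle element and a path from $e$ to $g$, identified via the holonomy $C(\ell)=\exp\bigl(i\int_{\partial^{-1}\ell}\sigma_\mu\bigr)$, and then matched with $P$ by horizontal lifting of paths. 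That construction requires the integrality hypothesis $[\sigma_\mu/2\pi]\in H^2(G,\mathbb Z)$, which your proposal omits. Since the theorem you are asked to prove only asserts an Euler equation on $\widehat{\mathfrak g}^*$, the omission is not fatal --- the Lie algebra cocycle $\sigma_\mu(e)$ exists whenever $\sigma_\mu$ is closed and left-invariant, and steps 1 and 3 already deliver the result --- but step 2 as written cannot be repaired without either adding the integrality assumption and the path construction, or dropping the group-level claim altogether.
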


The essence of the paper is described in the diagram in Figure \ref{fig:diagram}: we show how to view the fast time averaging approximation on the left by going via the averaging on the top and reduction in the right column of
the diagram. We describe this averaging-reduction procedure in Section \ref{sect:averaging}, and compare its result with the one obtained by using the classical fast-time averaging method in  Section \ref{sec:ODEaveraging}.

\smallskip

In Section \ref{sect:applications} we describe   three examples by using the averaging-reduction procedure developed in this paper: the vibrating pendulum manifests the appearance of the
amended potential, the Craik-Leibovich equation for oscillating boundary is related to the magnetic term in
the symplectic structure and a central extension, while  the motion of particles
in rapidly oscillating potentials has both  magnetic term and additional potential
present upon averaging.
(Note that, instead of the classical approach of applying ingenious canonical transformations  \cite{cole},
the present paper gives an alternative method of averaging natural Hamiltonian systems: one can average the metric, which contains all relevant information, and then obtain the averaged natural system directly
from that  metric.)

\smallskip

While the symplectic reduction  part of this paper is also valid for high-dimensional torus action, i.e. for
many-frequency case,   the approximation theorem does not work in this generality, as
for several frequencies resonances can appear unavoidably in such systems, 
as e.g. KAM theory manifests.
Note also that in many examples  the two contributions appearing in averaging, the magnetic term 
and the potential amendment, are of different order in the small parameter of perturbation. 
It would be interesting to see if it is always the case.

\bigskip

{\bf Acknowledgments.}
We are grateful to Mark Levi and Anatoly Neishtadt for stimulating discussions. A part of this work was done when C.Y. was visiting the Fields Institute in Toronto and B.K. was visiting the Weizmann Institute in Rehovot
and the IHES in Bures-sur-Yvette. The work was partially supported by an NSERC research grant.


\medskip

\section{Symplectic reduction of cotangent bundles}\label{sec:reduction}

We start by recalling  (following \cite{mars2}) general results on the symplectic
 reduction. Consider an action of  an abelian group $\mathbb T:=S^1$
(or more generally, a torus $\mathbb T=T^k$)
common in averaging, while the results   with appropriate amendments hold for a reduction by any Lie group.
Assume that the group $\mathbb T$ acts on a configuration space $Q$ (from the right) properly and freely,
so that the quotient space $Q/\mathbb T$ is a manifold.
Our first goal is to reduce $Q$ by the $\mathbb T$-action and describe structures on the reduced phase space.
The quotient
projection $\pi: Q \rightarrow B:=Q/\mathbb T$ defines a  principal fiber bundle over the base $B$.
It turns out that the curvature of this $\mathbb T$-bundle enters the symplectic structure of the reduced manifold.
The gyroscope example below can be regarded as an illustration of the abstract reduction procedure.
\smallskip

Namely, the group $\mathbb T$ acts on $T^* Q$ by cotangent lifts, and we denote the momentum map
 of this action by $J: T^* Q\rightarrow \mathfrak{t}^*$.
 The momentum map is a natural  projection 
 of  $T_q^* Q$  at any $q\in Q$ to the  cotangent space to the fiber, $\mathfrak{t}^*$.
For  an arbitrary value  $\mu\in \mathfrak{t}^*$ of the momentum map consider the reduced phase space\footnote{For an arbitrary Lie group the reduced space is defined as
$J^{-1}(\mu)/\mathbb T_\mu$ where $\mathbb T_\mu$ is the stationary subgroup of $\mu$.
In this section we use the fact that $\mathbb T$ is abelian, and hence  the stationary group $\mathbb T_\mu$ coincides with the full group: $\mathbb T_\mu = \mathbb T$.}
$(T^*Q)_\mu := J^{-1}(\mu)/\mathbb T$.

\begin{theorem} \label{thm:cotbundle}{\rm (see e.g. \cite{mars2})}
	Let $\mathbb T$ be an abelian group acting on a manifold $Q$ so that
	$\pi: Q \rightarrow Q/\mathbb T=:B$ is a principal fiber bundle, and  fix $\mu \in \mathfrak{t}^*$.
	Let $\mathcal{A}: TQ \rightarrow \mathfrak{t}$ be a principle connection 1-form on this bundle.
	Then
	
	 $i)$ for $\mu=0$ there is a symplectic diffeomorphism between $(T^* Q)_0$ and
	$T^*B=T^* (Q/\mathbb T)$ equipped with the canonical symplectic form 	$\omega_{can}$;
	
	$ii)$ for $\mu\not=0$ there is a symplectic diffeomorphism between $(T^* Q)_\mu$ and
	$T^*B$, where the latter is
	equipped with symplectic form $\omega_\mu:=\omega_{can} - \beta_\mu$. 	Here  the 2-form
	$\beta_\mu := \pi_P^* \sigma_\mu$ on $T^*B$ is obtained by the pull-back via the cotangent bundle
	projection $\pi_P: T^*B \rightarrow B$ from the 2-form $\sigma_\mu$ on $B$. The latter 2-form   is
	the $\mu$-component of the curvature of the principal fiber bundle $Q$ over $B$, namely
	$\pi^* \sigma_\mu = \mathbf{d}\left\langle \mu, \mathcal{A}\right\rangle.$
\end{theorem}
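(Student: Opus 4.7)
The strategy is to build, in both cases, an explicit symplectic diffeomorphism between $(T^*Q)_\mu$ and $T^*B$ with the claimed symplectic form. I would handle $\mu=0$ first, as a warm-up, and then reduce the general case to it via a connection-dependent ``shift trick.''

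For part $i)$, fix $\mu = 0$. The momentum map is $J(\alpha_q)(\xi) = \langle \alpha_q, \xi_Q(q)\rangle$ for $\xi \in \mathfrak{t}$, where $\xi_Q$ is the fundamental vector field. Thus $J^{-1}(0)$ consists precisely of covectors vanishing on the vertical distribution, i.e., pullbacks $\pi^*\eta$ for $\eta \in T^*_{\pi(q)}B$. The map $\alpha_q \mapsto (\pi(q), \eta)$ descends to a bijection $J^{-1}(0)/\mathbb{T} \to T^*B$. Because the canonical 1-form $\theta_{can}$ on $T^*Q$ restricted to $J^{-1}(0)$ pulls back from the canonical 1-form on $T^*B$ (a direct check using $\pi_*$ on tangent vectors), the induced symplectic form on $(T^*Q)_0$ is $\omega_{can}$.

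For part $ii)$, I would define the \emph{shift map} $s_\mu: T^*Q \to T^*Q$ by $s_\mu(\alpha_q) = \alpha_q - \langle \mu, \mathcal{A}(q)\rangle$, where $\langle \mu, \mathcal{A}\rangle$ is the $\mathbb{R}$-valued 1-form on $Q$ obtained by pairing with $\mu$. The defining property $\mathcal{A}(\xi_Q) = \xi$ of a principal connection gives $J(\langle \mu, \mathcal{A}\rangle(q)) = \mu$, so $s_\mu$ carries $J^{-1}(\mu)$ onto $J^{-1}(0)$. Because $\mathbb{T}$ is abelian, $\mu$ is automatically coadjoint-invariant and $R_g^*\mathcal{A}=\mathcal{A}$, so the 1-form $\langle\mu,\mathcal{A}\rangle$ is $\mathbb{T}$-invariant and $s_\mu$ is $\mathbb{T}$-equivariant. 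Composing $s_\mu$ with the identification from part $i)$ gives a diffeomorphism $\Phi_\mu: (T^*Q)_\mu \to T^*B$.

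It remains to compute $\Phi_\mu^*\omega_{can}$. Writing $\tau_Q: T^*Q \to Q$ for the bundle projection, the pullback of the canonical 1-form is
\begin{equation*}
s_\mu^*\theta_{can} = \theta_{can} - \tau_Q^*\langle\mu,\mathcal{A}\rangle,
\end{equation*}
so taking $\mathbf{d}$ yields $s_\mu^*\omega_{can} = \omega_{can} - \tau_Q^*\,\mathbf{d}\langle\mu,\mathcal{A}\rangle$. The hypothesis (the definition of $\sigma_\mu$) identifies $\mathbf{d}\langle\mu,\mathcal{A}\rangle$ on $Q$ with $\pi^*\sigma_\mu$. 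Under the quotient by $\mathbb{T}$ and the identification with $T^*B$, the term $\tau_Q^*\pi^*\sigma_\mu$ becomes exactly $\pi_P^*\sigma_\mu = \beta_\mu$. Therefore $\Phi_\mu^*(\omega_{can}-\beta_\mu)$ equals the restriction of $\omega_{can}$ to $J^{-1}(\mu)$, which by the Marsden--Weinstein reduction theorem coincides with the reduced symplectic form on $(T^*Q)_\mu$.

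The one genuinely delicate point is verifying that $\mathbf{d}\langle\mu,\mathcal{A}\rangle$ is a basic form on $Q$, so that it descends to $\sigma_\mu$ on $B$. Invariance under $\mathbb{T}$ follows from $R_g^*\mathcal{A}=\mathcal{A}$ (the abelianness of $\mathbb{T}$ is essential here, or more generally the $\mathbb{T}_\mu$-invariance of $\mu$). Horizontality follows from the Cartan structure equation: on vertical vectors $\mathbf{d}\langle\mu,\mathcal{A}\rangle$ agrees with the bracket term, which vanishes in the abelian case. Thus $\sigma_\mu$ is well defined on $B$ and the final identification $\omega_\mu = \omega_{can}-\beta_\mu$ is intrinsic (independent of the choice of connection $\mathcal{A}$, modulo changing $\beta_\mu$ by an exact form corresponding to the shift in the splitting).
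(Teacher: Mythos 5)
Your proposal is correct and follows essentially the same route as the paper's own proof outline: identify $(T^*Q)_0$ with $T^*B$ via the annihilator of the vertical distribution, then compose with the connection-dependent fiberwise shift $p_q \mapsto p_q - \langle\mu,\mathcal{A}(q)\rangle$ sending $J^{-1}(\mu)$ to $J^{-1}(0)$, with the magnetic term arising from $\mathbf{d}\langle\mu,\mathcal{A}\rangle$. You actually supply more detail than the paper (which defers to Marsden et al.), in particular the pullback computation $s_\mu^*\theta_{can}=\theta_{can}-\tau_Q^*\langle\mu,\mathcal{A}\rangle$ and the verification that $\mathbf{d}\langle\mu,\mathcal{A}\rangle$ is basic.
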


\begin{proof}[\textrm{\textit{Proof outline}}]
We just recall an explicit form of the isomorphism between $(T^* Q)_\mu$ and $T^* (Q/\mathbb T)$, see Theorem~2.3.3 in \cite{mars2} for more detail.

The isomorphism $\varphi_0: (T^* Q)_0 \rightarrow T^* (Q/\mathbb T)$ is defined by noting that
\[
J^{-1}(0) = \{  p_q \in T^* Q: \left< p_q , \xi_Q(q) \right> = 0 \quad \text{for all $\xi \in \mathfrak{t}$} \}\,,
\]
where $\xi_Q$ is the vector field on $Q$ corresponding to the infinitesimal action $\xi$, i.e. vectors $\xi_Q(q)$ span the vertical subspace at $q$.
Thus the map $\Phi : J^{-1}(0) \rightarrow T^*(Q/\mathbb T)$ given by
\begin{equation} \label{barphi}
	\left< \Phi(p_q) , \pi_*(v_q) \right> = \left<p_q, v_q \right>
\end{equation}
is well defined.  The map $\Phi$ is $\mathbb T$-invariant and surjective, and hence induces a quotient map $\varphi_0: (T^* Q)_0 \rightarrow T^* (Q/\mathbb T)$.

The isomorphism $\varphi_\mu : (T^* Q)_\mu \rightarrow T^* (Q/\mathbb T)$ is the composition
$\varphi_\mu = \varphi_0 \circ {\mathrm{shift}}_\mu$ of $\varphi_0$ with the 
 isomorphism  ${\mathrm{shift}}_\mu : (T^* Q)_\mu  \rightarrow (T^* Q)_0$  defined as follows.
 Introduce a map ${\mathrm{Shift}}_\mu : J^{-1}(\mu) \rightarrow J^{-1}(0)$ by
\[
	{\mathrm{Shift}}_\mu(p_q) = p_q - \left< \mu, \mathcal{A}(q) \right>
\]
for any $p_q\in  J^{-1}(\mu)$. It  is $\mathbb T$-invariant,  so  it drops to a quotient map ${\mathrm{shift}}_\mu: (T^* Q)_\mu \rightarrow (T^* Q)_0$.
The $\mathfrak{t}$-valued
2-form $\mathbf{d} \mathcal{A}$ is the curvature of the (abelian) connection $\mathcal{A}$, while to construct the 2-form  $\sigma_\mu$
one uses its $\mu$-component, cf.  \cite{mars2}.
\end{proof}

\begin{remark}
The isomorphism between $(T^* Q)_\mu$ and $T^*B=T^* (Q/\mathbb T)$ is connection-dependent.
The reduced symplectic form on $T^*B$ is modified by the curvature 2-form $\sigma_\mu$ on $B$, which is traditionally called a {\it magnetic term}, since it also appears in the description of  motion of a charged particle in a magnetic field on $B$.
\end{remark}

\begin{definition}\label{def:mech-conn}
Let the space $Q$ be  equipped with a $\mathbb T$-invariant metric. This metric
defines an invariant distribution of horizontal spaces: at each point $q\in Q$  there is a subspace of $T_qQ$ orthogonal to the fiber (i.e. the $\mathbb T$-orbit) at $q$. Hence the metric  defines an invariant  connection 1-form
$\mathcal{A}: TQ \rightarrow \mathfrak{t}$ on this fiber bundle. This 1-form is called a {\it mechanical connection.}
\end{definition}

Consider  a natural system on $T^*Q$ with Hamiltonian
$H(q,p)=(1/2) (p,\,p)_q+U(q)$
invariant with respect to the $\mathbb T$-action. (Here and below  $(.\,,.)_q$ stands for the
metric on $Q$, i.e. the inner product 
on $TQ$, or the induced one on  $T^*_qQ$, depending on the context. The Euclidean inner product in $\mathbb R^n$ is denoted by dot.) 
This system descends to a Hamiltonian system
on the quotient $(T^* Q)_\mu$ with respect to the symplectic structure  $\omega_\mu=\omega_{can} - \beta_\mu$.
The new Hamiltonian $H_\mu$ is obtained from $H$ by applying the map
${\mathrm{Shift}}_\mu$ and the corresponding potential $U(q)$ acquires an additional term, as we discuss below.

\begin{example}\label{ex:spinDisk}
The following example of a spinning disk (a gyroscope) on a curved surface shed the light on the geometry behind the symplectic reduction above. 
 Cox and Levi proved in \cite{cole2}  that the motion of the disk center coincides with the motion of a charged particle in a magnetic field which is normal to the surface and  equal in magnitude to the Gaussian curvature of the surface. 
 
To explain their result in the context of  reduction theory,  let $q=(q_1,q_2)$ be orthogonal local coordinates on a surface 
$B\subset \mathbb R^3$, so that  metric on the surface is given by $ds^2=a_{11}(q)dq_1^2+a_{22}(q)dq_2^2$. 
When the disk is not spinning, its kinetic energy is a function of its position $q$ and linear velocity $\dot q$, i.e. a function on $TB$. It  is given by
$$
E_0=\frac m2(G\dot q, \dot q)+\frac{\mathbb I_d}2 h(\dot q,\dot q),
$$
where $G=diag(a_{11},a_{22})$, $h$ is the second fundamental form of $B$, and $\mathbb I_d$ is the moment of inertia of the disk along its diameter. Denote by $\mathbb I_a$ the moment of inertia about the disk axis.

\begin{theorem}\label{thm:spinDisk}{\rm \cite{cole2}}
For a spinning disk on a surface $B$ the angular momentum $\mu=\mathbb I_a\omega_a$ of the disk about its axis is  constant and  the disk's center satisfies the following equation 
\begin{equation}\label{eq:spinDisk} 
\frac {d}{dt}\frac{\partial E_0}{\partial \dot q}-\frac{\partial E_0}{\partial q}=\sqrt{a_{11}a_{22}}\,\mu\,K(q)\,\left[\begin{array}{cc}0&-1\\1&0\end{array}\right]\,\dot q,
\end{equation}
where $K(q)$ is the Gaussian curvature of the surface $B$. 
\end{theorem}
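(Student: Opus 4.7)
The plan is to realize the spinning disk as a natural Hamiltonian system with an $S^1$-symmetry rotating the disk about its axis, and then to apply the cotangent bundle reduction of Theorem \ref{thm:cotbundle}. First I would take the configuration space $Q$ to be the $SO(2)$-principal bundle over $B$ whose fiber over $q$ records the orientation angle $\theta$ of a distinguished diameter of the disk with respect to a local reference frame on $B$; the $S^1$-action $\theta\mapsto\theta+c$ satisfies $Q/S^1=B$. The total kinetic energy of the disk is
$$
T=E_0(q,\dot q)+\tfrac12\,\mathbb{I}_a\,\omega_a^2,
$$
where the spin $\omega_a$ about the disk's axis, measured in a parallel-transported frame, takes the form $\omega_a=\dot\theta+\alpha_i(q)\dot q_i$, with $\alpha_i\,dq_i$ the Levi-Civita $SO(2)$-connection 1-form on $B$. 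The metric $T$ is $S^1$-invariant, so it defines the mechanical connection of Definition \ref{def:mech-conn}.

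Next I would observe that $T$-horizontality is equivalent to $\omega_a=0$, so the mechanical connection coincides with the Levi-Civita connection, $\mathcal{A}=d\theta+\alpha_i\,dq_i$. The classical Gauss identity for the curvature of the Levi-Civita connection on the oriented frame bundle of a surface then yields
$$
\mathbf{d}\mathcal{A}=K(q)\sqrt{a_{11}a_{22}}\,dq_1\wedge dq_2\,.
$$
The momentum map $J\colon T^*Q\to\mathbb{R}$ of the cotangent-lifted $S^1$-action evaluates to $J=\mathbb{I}_a\omega_a$, so Noether's theorem gives the conservation of $\mu=\mathbb{I}_a\omega_a$ claimed in the statement. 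Applying Theorem \ref{thm:cotbundle}, the reduced phase space $(T^*Q)_\mu$ is symplectomorphic to $(T^*B,\omega_{can}-\beta_\mu)$ with magnetic term $\sigma_\mu=\mu K(q)\sqrt{a_{11}a_{22}}\,dq_1\wedge dq_2$. The shift by $\langle\mu,\mathcal{A}\rangle$ strips the spin term from $T$ and contributes only the constant amended potential $\mu^2/(2\mathbb{I}_a)$, so the reduced Hamiltonian is, up to that constant, the Legendre transform of $E_0$.

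Finally I would pass Hamilton's equations on $(T^*B,\omega_{can}-\beta_\mu)$ to the Lagrangian side. The canonical part reproduces the left-hand side of (\ref{eq:spinDisk}), while the magnetic 2-form contributes a Lorentz-type force $-\iota_{\dot q}\sigma_\mu$ which, in the orthogonal coordinates $(q_1,q_2)$, equals $\mu\,K(q)\sqrt{a_{11}a_{22}}$ times the $90^{\circ}$ rotation applied to $\dot q$, matching the right-hand side of (\ref{eq:spinDisk}). The main obstacle is the identification carried out in the first two steps: one has to recognise that the cross term $\mathbb{I}_a\,\dot\theta\,\alpha_i\dot q_i$ hidden inside $\tfrac12\mathbb{I}_a\omega_a^2$ is precisely what forces the mechanical connection of the spin metric to coincide with Levi-Civita, and hence makes the magnetic term proportional to the Gaussian curvature $K(q)$. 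Once that identification is in place, the rest is a direct application of Theorem \ref{thm:cotbundle}.
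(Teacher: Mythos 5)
Your proposal is correct and follows essentially the same route as the paper: realize the disk as a natural system on the circle bundle $Q$ over $B$ with the $\mathbb T$-invariant metric containing the spin term $\tfrac12\mathbb I_a(\dot\theta+A(q)\dot q)^2$, identify the mechanical connection $d\theta+A(q)\,dq$ whose curvature is $\sqrt{a_{11}a_{22}}\,K\,dq_1\wedge dq_2$, and apply Theorem \ref{thm:cotbundle} at momentum $\mu=\mathbb I_a\omega_a$, noting the amended potential is the constant $\mu^2/(2\mathbb I_a)$. The only cosmetic difference is that you name the mechanical connection as the Levi-Civita connection and invoke the Gauss identity, while the paper writes its coefficients explicitly via the geodesic curvatures $A(q)=(k_1\sqrt{a_{11}},k_2\sqrt{a_{22}})$ and computes $d(A(q)\,dq)$ directly.
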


\begin{remark}
Before we provide a different proof of this result via symplectic reduction, note that the configuration space $Q$ of this system is the circle bundle over the surface: the disk position is defined by the position of its center on the surface $B$ 
and the angle of rotation.
Globally $Q$ may be a nontrivial $\mathbb T$-bundle over $B=Q/\mathbb T$. However,  for the local consideration below it suffices to consider the case 
$Q=B\times \mathbb T$.  The phase space  is the corresponding tangent bundle $TQ$. 
The metric in $\mathbb R^3$ allows one to identify $TQ$ and $T^*Q$, while the disk motion is a Hamiltonian system on the cotangent bundle $T^*Q$.
The trajectory of the disk center can be obtained as the symplectic reduction  of the system on $T^*Q$ with respect to the
$\mathbb T$-action, as we quotient out the disk rotation. Different angular velocities of the disk lead to different values $\mu$
of the momentum map, over which  one takes the quotient. 
According to Theorem \ref{thm:cotbundle} the resulting system is a Hamiltonian system on $T^*B$, with two 
amendments. The corresponding symplectic structure after the reduction will be twisted by a magnetic term.
In this setting it will be proportional to the curvature $K(q)$ of the surface, which one observes in Equation \eqref{eq:spinDisk}. 
Moreover, the corresponding Hamiltonian undergoes a shift by $\mu$. However, in the gyroscope case the shift 
reduces to adding a constant to the Hamiltonian and does not appear in the equations.\footnote{Jumping ahead, in order to see this 
one can use an explicit formula of Theorem \ref{thm:ave-red}, which gives an additional term $\frac 12 \langle \mu,\,\mathbb{I}(q)^{-1}\mu\rangle$. It is indeed constant, since in the gyroscope case the inertia operator $\mathbb I$ does not depend on $q$, while $\mu$ is a constant angular velocity.}
\end{remark}

\begin{proof}
Let us identify $TB$ and $T^*B$ by means of the metric. 
First note that Equation (\ref{eq:spinDisk}) is the Euler-Lagrange equation for a Lagrangian system, which can be rewritten as a Hamiltonian system with the  Hamiltonian energy function $E_0$ on the cotangent bundle of the surface $B$ (thanks to the metric identification) with a twist symplectic structure given in local coordinates by
$$
\omega_{\mu}=\omega_{can}-\mu\,\sqrt{a_{11}a_{22}}\,K(q)\,dq_1\wedge dq_2\,,
$$
where $\omega_{can}$ is the canonical symplectic structure on $T^*B$.

Next, we show how to obtain this system via symplectic reduction.
Denote by $\theta$  the angle between a fixed radius on the disk and the positive direction of the line 
$\{q_2=const\}$. This gives us a principal
$\mathbb T$-bundle $Q$ with the curved surface $B$ as the base. 

The absolute angular velocity of a spinning disk is $\Omega_a=\dot \theta+A(q)\dot q$, where $A(q)\dot q$ is the transferred velocity, and $A(q)=(k_1\sqrt{a_{11}},k_2\sqrt{a_{22}})$, where $k_1,\;k_2$ are the geodesic curvatures of coordinate lines
$\{q_1=const\}$ and $\{q_2=const\}$.

So, in local coordinates, the metric on the principal $\mathbb T$-bundle $Q$ is given by
$$
\left((\dot q,\dot \theta),(\dot q,\dot \theta)\right)_{(q,\theta)}=\mathbb I_a(\dot\theta+A(q)\dot q)^2+m(G\dot q,\dot q)+\mathbb I_dh(\dot q,\dot q).
$$
Note that this metric is invariant under the $\mathbb T$-rotations.

Therefore, the momentum map $J:TQ\rightarrow \mathfrak t^*=\mathbb R$ is $J(q,\theta;\dot q,\dot \theta)=\mathbb I_a(\dot\theta+A(q)\dot q)$ and the mechanical connection
$\mathcal A:TQ\rightarrow\mathfrak t=\mathbb R$ is  $\mathcal A=d\theta+A(q)\,dq$ (here we again identify $TQ$ and $T^*Q$). By Theorem \ref{thm:cotbundle}, for a fixed value $\mu=\mathbb I_a\Omega_a$ of
the momentum map $J$, the system can be reduced to the (co)tangent bundle of the surface $B$ with the magnetic symplectic structure
$$
\omega_{\mu}=\omega_{can}-\mu\;d(A(q)dq)=\omega_{can}-\mu\;\sqrt{a_{11}a_{22}}\,K(q)\,dq_1\wedge dq_2,
$$
where $K(q)$ is the Gaussian curvature of the surface $B$. 

The energy Lagrangian $E=\frac 12 \left((\dot q,\dot \theta),(\dot q,\dot \theta)\right)_{(q,\theta)}$ 
on $Q$ defines the reduced Hamiltonian on $B$, which  turns out to be $E_0=1/2(m(G\dot q, \dot q)+\mathbb I_d h(\dot q,\dot q))$.  Here we omit the constant term $\mathbb I_a(\dot\theta+A(q)\dot q)^2
=\langle \Omega_a,\,\mathbb{I}_a\Omega_a\rangle= \langle \mu,\,\mathbb{I}_a^{-1}\mu\rangle$ in the energy expression, 
since the value $\mu$ of the momentum map (i.e. the angular momentum of the disk) is conserved. 

This reduced Hamiltonian system with Hamiltonian function $E_0$ 
on the cotangent bundle $(T^*B, \omega_\mu)$ of the surface describes the motion of the disk center.
\end{proof}

\end{example}


\medskip
\section{Averaging-Reduction procedure for a natural system}\label{sect:averaging}
\subsection{Averaging}
Let $\pi:Q\to B$ be a principal $\mathbb T$-bundle. From now on we assume that $\mathbb T=S^1$ (and occasionally comment on $\mathbb T=T^n$). The cotangent lift of $\mathbb T$-action on $Q$ induces $\mathbb T$-action on $T^*Q$.
Denote by $\rho, \rho^*$, and $\rho_*$ the $\mathbb T$-action on $Q, T^*Q$ and $TQ$, respectively.
Let $d\eta$ be the standard Euclidean measure on the group $\mathbb T$.

Consider a natural Hamiltonian system on the cotangent bundle $T^*Q$:
\begin{equation}\label{eq:nature}
H(q,p)=\frac 12 (p,\,p)_q+U(q).
\end{equation}
Here $Q$ is the configuration space of the motion, we assume that this Hamiltonian system has slow motion on the base  $B$ and fast motion on the fibers isomorphic to $\mathbb T$. 
The Hamiltonian function $H(q,p)$ is not necessarily invariant
under the $\mathbb T$-action on $T^*Q$. As the first step one passes to the space $\mathbb T$-average
$\overline{H(q,p)}^{\mathbb T}$, the $\mathbb T$-invariant function on $T^*Q$ defined by the following formula:
$$
 \overline{H(q,p)}^{\mathbb T}:=\frac{1}{\eta(\mathbb T)}\int_{g\in \mathbb T} H(\rho_g^*(q,p))\;d\eta(g).
$$
For the natural system \eqref{eq:nature}, one averages both the kinetic and potential parts of the energy:
$$
 \overline{H(q,p)}^{\mathbb T}=\frac 12 \overline{(p,p)}_q^{\mathbb T}+\overline{U(q)}^{\mathbb T},
$$
where  $\overline{U(q)}^{\mathbb T} =\frac{1}{\eta({\mathbb T})}\int_{\mathbb T} U(\rho_g(q))\,d\eta(g)$
and $\overline{(p,p)}_q^\mathbb T=\frac{1}{\eta({\mathbb T})}\int_{\mathbb T}(\rho_g^*p,\rho_g^*p)_{\rho_{g^{-1}}(q)}\,d\eta(g)$ 
is defined via the following averaged metric on $Q$:
\begin{definition}
 The \textit{averaged metric} $\overline{(\cdot,\cdot)}^{\mathbb T}$ on the principal $\mathbb T$-bundle $Q$ is given by
$$
\overline{(v,v)}_q^{\mathbb T}:=\frac{1}{\eta(\mathbb T)}\int_{\mathbb T}(\rho_{g*} v,\rho_{g*} v)_{\rho_g(q)}\,d\eta(g),
$$
for any $v\in T_qQ$. This defines a  $\mathbb T$-invariant metric on $Q$.
\end{definition}

Now define the connection on $Q$ corresponding to the averaged metric:

\begin{definition}
 The \textit{averaged connection} $\bar{\mathcal{A}}\in\Omega^1(Q,\mathfrak t)$ on the principal $\mathbb T$-bundle $Q$ is the  connection  induced by the averaged metric $ \overline{(\,,\,)}_q^{\mathbb T}$
by the invariant distribution of horizontal spaces: at each point $q\in Q$  there is a subspace of $T_qQ$ orthogonal to the fiber (i.e. the $\mathbb T$-orbit) at $q$.  

The connection induced by an invariant averaged metric on $Q$ is  the mechanical connection, according to
Definition \ref{def:mech-conn}.
\end{definition}

\begin{remark}
We would like to give a more explicit description of averaged metrics and connections.
First note that a  $\mathbb T$-invariant metric $(.\,,.)_q$ on $Q$ can be defined by means of a metric operator
$\mathbb I_Q(q):T_qQ\to T^*_qQ$ for $q\in Q$, where $\mathbb I_Q(q): v\mapsto v^\flat$, i.e. 
$(v,v)_q:=\langle v, \mathbb I_Q (q)v\rangle$ for $v\in T_qQ$.
This defines  the ``fiber inertia operator" $\mathbb I(q):\mathfrak t\to \mathfrak t^*$ by restricting to
$\mathfrak t=T_q\mathbb T\subset T_qQ$ the metric operator $\mathbb I_Q(q)$
for $q\in Q$. (Recall, that for $\mathbb T=S^1$, we have $\mathfrak t=\mathbb R$.) The $\mathbb T$-invariance of metric implies that the fiber inertia operator $\mathbb I$ is equivariant, $\mathbb I(g(q))=Ad_{g^{-1}}^*\mathbb I(q)$, i.e. it depends on the base point $\pi(q)\in B=Q/\mathbb T$ only.

\smallskip
The invariant metric on $TQ$ also induces the momentum map $J:T^*Q\to \mathfrak t^*$ for the action of the group $\mathbb T$.
In these terms the averaged mechanical connection can be defined explicitly by
$$
\bar{\mathcal{A}}(v_q)=\mathbb{I}(q)^{-1}J(p_q),
$$
where $v_q$ is a tangent vector in $T_qQ$, \, $p_q:=\mathbb I_Q(q)v_q=v_q^\flat\in T_q^*Q$ is the corresponding metric-dual cotangent vector, and $\mathbb I$ is the inertia operator on $\mathfrak t$ in the fiber at $q$.
\end{remark}

\begin{remark}\label{rem:metric}
More specifically, in coordinates for a trivial bundle $Q$  the general form for a $\mathbb T$-invariant metric on $Q=B\times \mathbb T$ is as follows:
\begin{equation}\label{eq:aveMet}
((u,\gamma),(u,\gamma))_{(x,\tau)}=(u,u)_x+2\gamma\, h(x)\,\langle A(x),u\rangle+h(x)\gamma^2,
\end{equation}
where $(u,\gamma)\in T_{(x,\tau)}(B\times \mathbb T)=T_x B\times\mathfrak t$, $A(x)\in \Omega^1(B,\mathfrak t)=T_x^*B$,   $h(x)\in \mathbb R^{+}$, and $\mathfrak t\simeq \mathbb R$.
For a non-trivial $Q$ this general form is valid locally on the base.
\end{remark}

\begin{proposition}\label{prop:momentum}
For a trivial bundle $Q=B\times\mathbb T$ the averaged connection 
$
\bar{\mathcal{A}}\in\Omega^1(B\times\mathbb T,\mathfrak t)= T^*_{(x,\tau)}(B\times\mathbb T)
$ 
corresponding to the averaged metric (\ref{eq:aveMet}) is given by $\bar{\mathcal{A}}(x,\tau)=A(x)+d\tau$.
The summands can be regarded as connections on the base $A(x)\in T^*_xB$   and in the fiber $d\tau$.
\end{proposition}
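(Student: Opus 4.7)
The plan is to verify directly that the 1-form $\bar{\mathcal A}(x,\tau)=A(x)+d\tau$ satisfies the two defining properties of the mechanical connection associated with the averaged metric \eqref{eq:aveMet}: namely (a) it takes a fundamental vector field of the $\mathbb T$-action to its generator in $\mathfrak t$, and (b) it annihilates the horizontal subspace, i.e.\ the metric orthogonal complement of the fiber direction at each point. Both properties are local on the base, so the formulation for a trivial bundle captures the general situation.

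First I would polarize the quadratic form in Remark \ref{rem:metric} to obtain the bilinear inner product
\[
((u_1,\gamma_1),(u_2,\gamma_2))_{(x,\tau)}
=(u_1,u_2)_x+h(x)\bigl(\gamma_1\langle A(x),u_2\rangle+\gamma_2\langle A(x),u_1\rangle\bigr)+h(x)\gamma_1\gamma_2.
\]
The fiber direction at $(x,\tau)$ is spanned by $\partial_\tau$, which corresponds to $(0,1)$. Pairing a general vector $(u,\gamma)$ with $(0,1)$ gives $h(x)\langle A(x),u\rangle+h(x)\gamma$, so $(u,\gamma)$ is horizontal if and only if $\gamma=-\langle A(x),u\rangle$. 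Evaluating $A(x)+d\tau$ on such a horizontal vector yields $\langle A(x),u\rangle+\gamma=0$, and evaluating it on $(0,1)$ yields $1$, which is precisely the generator of $\mathfrak t\simeq\mathbb R$. Because neither $A(x)$ nor $d\tau$ depends on $\tau$, the form is also $\mathbb T$-invariant. These three checks together identify $\bar{\mathcal A}$ as the principal connection induced by the averaged metric.

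As a cross-check I would recompute $\bar{\mathcal A}$ from the intrinsic formula $\bar{\mathcal A}(v_q)=\mathbb I(q)^{-1}J(p_q)$ in the remark preceding the proposition. The momentum value of $p_q=(u,\gamma)^\flat$ is its pairing with $\partial_\tau$, which by the polarized metric equals $h(x)\langle A(x),u\rangle+h(x)\gamma$; meanwhile the fiber inertia operator $\mathbb I$ acts on $\mathfrak t\simeq\mathbb R$ as multiplication by $(\partial_\tau,\partial_\tau)_{(x,\tau)}=h(x)$. Dividing gives $\langle A(x),u\rangle+\gamma$, matching $(A(x)+d\tau)(u,\gamma)$.

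There is no genuine obstacle here; the result is an unwinding of definitions once the bilinear form associated to \eqref{eq:aveMet} is written out. The only point requiring a little care is the interpretation of $A(x)$: it is a $\mathfrak t$-valued 1-form on $B$ pulled back to $Q=B\times\mathbb T$, so the decomposition $\bar{\mathcal A}=A(x)+d\tau$ should be read as the sum of the pullback of a base connection and the Maurer--Cartan form of the fiber, which is exactly the decomposition used to describe the mechanical connection when the bundle is locally trivialized.
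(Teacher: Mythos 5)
Your proposal is correct, and in fact it contains the paper's own proof as its second half: the ``cross-check'' via $\bar{\mathcal A}(v_q)=\mathbb I(q)^{-1}J(p_q)$ is exactly the argument given in the paper, which computes the momentum map $J_{(x,\tau)}$ by pairing with the infinitesimal generator $(0,\zeta)$ through the polarized metric, identifies the fiber inertia operator as multiplication by $h(x)$, and divides. Your primary argument is a mildly different route: rather than invoking the formula $\mathbb I^{-1}J$, you verify directly that $A(x)+d\tau$ satisfies the defining properties of the mechanical connection of Definition \ref{def:mech-conn} --- it sends the fundamental vector field $(0,1)$ to the generator $1\in\mathfrak t$, it annihilates the metric-orthogonal complement of the fiber (which you correctly compute to be $\{\gamma=-\langle A(x),u\rangle\}$), and it is $\mathbb T$-invariant. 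What the direct verification buys is independence from the intrinsic formula in the remark preceding the proposition (so the proposition stands even if one has not yet accepted that $\mathbb I^{-1}J$ computes the mechanical connection), and it makes the horizontal distribution explicit, which is conceptually useful later when the curvature $d\bar{\mathcal A}$ enters the magnetic term. What the paper's route buys is brevity and a worked example of the momentum-map computation that is reused in Theorems \ref{thm:aveCL} and \ref{thm:Par_thm}. All the individual computations in your proposal (the polarization, the pairing with $(0,1)$, the value of $\mathbb I(x)$) agree with the paper's.
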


\begin{proof}
For a trivial bundle $Q$  the momentum map $J:T_{(x,\tau)}^*(B\times \mathbb T)\to \mathfrak t^*$ is given by 
$$
J_{(x,\tau)}(a,\eta)=h(x)\langle A(x),u\rangle+h(x)\langle d\tau, \xi\rangle,
$$
where $(a,\eta)\in T_{(x,\tau)}^*(B\times \mathbb T)$ and $(u,\xi)\in T_{(x,\tau)}(B\times \mathbb T)$ is the image 
of $(a,\eta)$ under the metric identification. Indeed, by  definition of the momentum map, 
for any $\zeta\in\mathfrak t$, 
$$
\langle J_{(x,\tau)}(a,\eta),\zeta\rangle=\langle (a,\eta),(0,\zeta)\rangle=((u,\xi),(0,\zeta))
$$
$$
=(u,0)_x+\zeta h(x)\langle A(x),u\rangle+\xi h(x)\langle A(x),0\rangle+\xi h(x)\zeta=\langle h(x)\langle A(x),u\rangle+h(x)\xi,\zeta\rangle.
$$  
Furthermore, the inertia operator $\mathbb I(x):\mathfrak t\to \mathfrak t^*$ at $x\in B$ is given by
$\mathbb I(x)\gamma=h(x)\gamma $ for any $\gamma\in\mathfrak t,$ hence 
the average mechanical connection assumes the form
$\bar{\mathcal{A}}(u,\xi)=\mathbb{I}(x)^{-1}J(a, \eta)=\langle A(x),u\rangle+\langle d\tau, \xi\rangle$, as required.
\end{proof}

\smallskip


\subsection{Reduction}
By considering the $\mathbb T$-invariant metric and Hamiltonian (obtained by $\mathbb T$-averaging)
we are now in the framework of Section \ref{sec:reduction}.
The dynamics defined by the averaged Hamiltonian $\overline{H}^{\mathbb T}$
on $T^*Q$ can be derived from the corresponding \textit{averaged} or \textit{slow motion}, i.e. 
the dynamics on $T^*B$ of the base space $B=Q/\mathbb T$. However, unlike the standard averaging  
method discussed below in Section \ref{subsec:averHam}, now we obtain this slow motion via symplectic reduction.

Recall that, for a fixed value $\mu$ of the momentum map,  the reduced space $J^{-1}(\mu)/\mathbb T$ is symplectomorphic to the cotangent bundle $T^*B$ of the base space $B$ with the twisted symplectic form 
 $\omega_\mu=\omega_{can}-\beta_\mu,$
where $\omega_{can}$ and $\beta_\mu$ are the canonical and magnetic 2-forms
on $T^*B$ (see Theorem \ref{thm:cotbundle}).
The averaged/slow system turns out to be a Hamiltonian system on the symplectic manifold $(T^*B,\omega_\mu)$ with the following reduced Hamiltonian function $\bar H_\mu$.

\begin{theorem}\label{thm:ave-red}
For a natural system on a $\mathbb T$-bundle $Q$ over the slow manifold $B$
with Hamiltonian function $H(q,p)=(1/2) (p,\,p)_q+U(q)$
the result of the symplectic reduction
with respect to the $\mathbb T$-action of the averaged system is a natural system
with the Hamiltonian function $\bar H_\mu$,
\begin{equation}\label{aveHam}
\bar H_\mu(q,p)=\frac 12 (p,\,p)_B+U_\mu(q)\,,
\end{equation}
 on the symplectic manifold $(T^*B,\omega_\mu)$.
Here $(q,p)\in T^*B$, $(\cdot,\cdot)_B$ stands for the metric on the base $B=Q/\mathbb T$ obtained as a Riemannian
submersion $Q\to B$ from the metric 
$\overline{(\cdot,\cdot)}^{\mathbb T}$ on $Q$, while 
$U_\mu(q):=\overline{ U(q)}^{\mathbb T}+\frac 12 \langle \mu,\,\mathbb{I}(q)^{-1}\mu\rangle$ is the effective potential.
\end{theorem}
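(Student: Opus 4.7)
The plan is to carry out the two stages of the procedure in turn and then identify the result of the symplectic reduction applied to the space-averaged Hamiltonian with the claimed natural system on $(T^*B,\omega_\mu)$.

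First, I would record the effect of the space averaging on the natural Hamiltonian: since the averaging commutes with the splitting into kinetic and potential parts, $\overline{H}^{\mathbb T}$ is itself a natural Hamiltonian,
\[
\overline{H}^{\mathbb T}(q,p)=\tfrac 12\,\overline{(p,p)}^{\mathbb T}_q+\overline{U(q)}^{\mathbb T},
\]
associated to the averaged metric and the averaged potential. By construction both pieces are $\mathbb T$-invariant, so $\overline{H}^{\mathbb T}$ descends to a function on $(T^*Q)_\mu$ for each $\mu\in\mathfrak t^*$. At this point Theorem~\ref{thm:cotbundle} supplies the symplectic isomorphism $\varphi_\mu:(T^*Q)_\mu\to (T^*B,\omega_\mu)$ built from $\mathrm{shift}_\mu$ and $\varphi_0$ via the averaged mechanical connection $\bar{\mathcal A}$; so the entire task reduces to computing $\overline{H}^{\mathbb T}\circ\varphi_\mu^{-1}$.

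Next I would use the averaged metric to give an orthogonal decomposition $T_qQ=\mathrm{Ver}_q\oplus\mathrm{Hor}_q$, where $\mathrm{Hor}_q=\ker\bar{\mathcal A}_q$ is precisely the horizontal distribution of the mechanical connection. Dually, any $p_q\in T_q^*Q$ splits into horizontal and vertical cotangent components, and the vertical part of $p_q$ coincides with $\langle J(p_q),\bar{\mathcal A}(q)\rangle$. For $p_q\in J^{-1}(\mu)$ the shift $\mathrm{Shift}_\mu(p_q)=p_q-\langle\mu,\bar{\mathcal A}(q)\rangle$ kills exactly the vertical part, so $\mathrm{Shift}_\mu(p_q)$ is the horizontal component and $\varphi_0$ identifies it with the base covector $\pi_*^{-1}(\mathrm{Shift}_\mu(p_q))\in T^*B$. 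Because the splitting is orthogonal for the averaged metric, the kinetic term decouples:
\[
\tfrac 12\,\overline{(p_q,p_q)}^{\mathbb T}=\tfrac 12\,\overline{(p_q^{\mathrm{hor}},p_q^{\mathrm{hor}})}^{\mathbb T}+\tfrac 12\,\overline{(p_q^{\mathrm{ver}},p_q^{\mathrm{ver}})}^{\mathbb T}.
\]
The horizontal piece is $\mathbb T$-invariant and its value at $q$ depends only on $\pi(q)\in B$; by definition of the Riemannian submersion metric, its pushforward coincides with $\tfrac 12(p,p)_B$ on $T^*B$.

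Finally I would compute the vertical piece. Writing $p_q^{\mathrm{ver}}=\langle\mu,\bar{\mathcal A}(q)\rangle$ and using the explicit form $\bar{\mathcal A}(v_q)=\mathbb I(q)^{-1}J(v_q^{\flat})$ from the remark preceding Proposition~\ref{prop:momentum}, the metric norm of the vertical covector is exactly $\langle\mu,\mathbb I(q)^{-1}\mu\rangle$, which (since the vertical inertia operator $\mathbb I(q)$ is already $\mathbb T$-invariant) equals its own average. Adding the averaged potential $\overline{U(q)}^{\mathbb T}$, which depends only on the base point, assembles $U_\mu(q)=\overline{U(q)}^{\mathbb T}+\tfrac12\langle\mu,\mathbb I(q)^{-1}\mu\rangle$, proving the formula for $\bar H_\mu$. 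The place I expect the most care is in Step~3: justifying both that the horizontal–vertical decomposition is \emph{orthogonal with respect to the averaged metric} (which is what makes the cross terms vanish and is the whole reason we replaced the original metric by its $\mathbb T$-average) and that the horizontal kinetic energy genuinely descends to the Riemannian submersion metric on $B$; the rest is a direct unwinding of the definitions in Theorem~\ref{thm:cotbundle}.
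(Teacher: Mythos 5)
Your proposal is correct and follows essentially the same route as the paper's proof: average, shift by $\langle\mu,\bar{\mathcal A}(q)\rangle$, and observe that the cross term vanishes because the mechanical connection makes the horizontal and vertical directions orthogonal for the averaged metric (the paper phrases this as $\langle\mu,\mathbb I(q)^{-1}J(p)\rangle=0$ since $J(p)=0$ for the base covector). Your explicit horizontal--vertical decomposition of $p_q$ is just a slightly more geometric restatement of the paper's direct expansion of $\overline{(p+\langle\mu,\bar{\mathcal A}(q)\rangle,\,p+\langle\mu,\bar{\mathcal A}(q)\rangle)}_q^{\mathbb T}$.
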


\begin{proof}
We start by computing the result of averaging and consequent symplectic reduction  on $T^*Q$ with respect to the $\mathbb T$-action.
Upon averaging along $\mathbb T$-orbits one can assume that the Hamiltonian $\bar H$ on $Q$ is $\mathbb T$-invariant,
$\bar{H}(q,p)=\overline{ H(q,p)}^{\mathbb T}$.
The reduced Hamiltonian system on the quotient $(T^* Q)_\mu$ is Hamiltonian with respect to the symplectic structure  
$\omega_\mu=\omega_{can} - \beta_\mu$. The new Hamiltonian is obtained from $\bar H$ by applying the map
${\mathrm{Shift}}_\mu$.
Namely, abusing the notation, for $(q,p)\in T^*B$  and a connection  $\bar{\mathcal{A}}$ in the $\mathbb T$-bundle $Q$ one has
$$
\bar H_\mu(q,p)=\bar H (q,\,p+  \langle \mu,\bar{\mathcal{A}}(q)  \rangle)
=\frac 12\overline{( p+  \langle \mu, \bar{\mathcal{A}}(q)  \rangle, \, p+  \langle \mu,\bar{\mathcal{A}}(q)\rangle )}_q^{\mathbb T}+\overline{U(q)}^{\mathbb T}
$$
$$
=\frac 12 (p,\,p)_B+\overline{(p,\,  \langle \mu, \bar{\mathcal{A}}(q)  \rangle)}_q^{\mathbb T}
+ \frac 12\overline{(\langle \mu, \bar{\mathcal{A}}(q)\rangle,\,\langle \mu, \bar{\mathcal{A}}(q)\rangle)}_q^{\mathbb T}+\overline{U(q)}^{\mathbb T}
=\frac 12 (p,\,p)_B+\overline{U_\mu(q)}^{\mathbb T}
$$
for $U_\mu(q):=\frac 12 \overline{(\langle \mu, \bar{\mathcal{A}}(q)\rangle,\,\langle \mu, \bar{\mathcal{A}}(q)\rangle)}_q^{\mathbb T}+\overline{U(q)}^{\mathbb T}$.
Here we use that $\bar{\mathcal{A}}$ is the mechanical connection corresponding to the averaged metric $\overline{(\cdot,\cdot)_q}^{\mathbb T}$, and hence we have
$\overline{(p,\,  \langle \mu, \bar{\mathcal{A}}(q)  \rangle)}_q^{\mathbb T}=\langle \mu, \, \bar{\mathcal{A}}(q)(v)\rangle=\langle \mu,\,\mathbb I(q)^{-1}J(p)\rangle=0$, since $J(p)=0$, and where $(q,p)\in T^*B$ is identified with  $(q,v)\in TB$ by means of
the averaged metric. Thus on the reduced symplectic manifold
$T^*B$ with the twisted symplectic form $\omega_\mu=\omega_{can} - \beta_\mu$
the new reduced Hamiltonian is
$$
\bar H_\mu(q,p)=\frac 12 (p,\,p)_B+U_\mu(q)
$$
for $q\in B$ and $p\in T^*_qB$. It is  a natural system with a new effective potential
$$
U_\mu(q)=\frac 12 \overline{(\langle \mu, \bar{\mathcal{A}}(q)\rangle,\,\langle \mu, \bar{\mathcal{A}}(q)\rangle)}_q^{\mathbb T}+\overline{U(q)}^{\mathbb T}
=\frac 12 \langle \mu, \mathbb I(q)^{-1}\mu\rangle+\overline{U(q)}^{\mathbb T}\,.
$$
\end{proof}

In Section \ref{sect:Ave_natHamiltonian} below we will prove the following corollary of Theorem \ref{thm:ave-red} for averaging
one-frequency fast-oscillating systems: Under certain conditions, solutions of the averaged system and projections to slow manifold of solutions of the actual system with the same initial conditions remain $\epsilon$-close to each other for $0\le t\le 1/\epsilon$.

\begin{remark}
The two features of the averaged-reduced Hamiltonian system are the additional term in the effective potential
$U_\mu$ and the magnetic term $-\beta_\mu$ in the symplectic structure $\omega_\mu$.
Therefore this averaging-reduction procedure provides a geometrical explanation of these two phenomena,
often observed in the averaging theory.
\end{remark}

\begin{remark}
In the classical averaging of fast-oscillating systems  (cf. Section \ref{subsec:averHam} below)
one starts by fixing the action variable $J$. This  
can be regarded as a manifestation of symplectic reduction in flat coordinates, as this means fixing 
a certain value of the corresponding momentum map.
The bundle averaging-reduction procedure described here is also applicable in that case, but the metric
in this bundle turns out to be flat. Namely, in the reduction to a submanifold $J=\mu$ 
one chooses a flat connection on the principal bundle which corresponds to the direct product of the base
and fibres, and hence no twisted symplectic structure appears on the reduced manifold: for
the momentum value $J=\mu$, the averaged Hamiltonian function is $\epsilon \,\bar{H}(Q,P,\mu)$
on the ``flat" cotangent bundle  $(T^*\mathbb R^{\ell}, dP\wedge dQ)$.
\end{remark}


\medskip

\section{Central extensions in symplectic reduction} \label{sect:cenExtention}
Above we described the reduced phase space $(T^* Q)_\mu$ for the right  action by the group $\mathbb T$.
In this case, the reduced phase space $(T^* Q)_\mu$ coincides with $T^* (Q/\mathbb T)$, equipped with the magnetic symplectic structure $\omega_\mu$ described before.
Now assume in addition that the base space $Q/\mathbb T$ has the structure of another Lie group $G$, which
acts on itself from the left and leaves the metric on $G=Q/\mathbb T$ invariant.
 As a result, $G$ acts on $T^* G=T^* (Q/\mathbb T)$ and, as one can check,  this action leaves the symplectic structure $\omega_\mu=\omega_{can} - \beta_\mu$ invariant.
 (Recall that the magnetic 2-form $\beta_\mu := \pi_G^* \sigma_\mu$
 on $T^* G$ is the pullback of the left-invariant 2-form $\sigma_\mu$ on the group $G$.)
 Hence another  reduction  for this $G$-action (``the reduction by stages") would take this magnetic
 symplectic structure on $T^*G$
 to an appropriate structure on the dual Lie algebra $\mathfrak{g}^*$, as described below.

\begin{theorem}{\rm (Theorem~7.2.1 in \cite{mars2})}\label{thm:poisson}
  The Poisson reduced space for the left  action  of
  $G$ on $(T^* G, \omega_\mu=\omega_{can} - \beta_\mu)$ is the dual Lie algebra $\mathfrak{g}^*$ with the
  Poisson bracket given by
\begin{equation} \label{bracket}
\{f, g\}_\mu(\nu) = -\left< \nu, \left[ \frac{\delta f}{\delta \nu},
    \frac{\delta g}{\delta \nu} \right] \right> - {\sigma}_\mu(e)\left(
  \frac{\delta f}{\delta \nu}, \frac{\delta g}{\delta \nu} \right)
\end{equation}
for $f, g \in C^\infty(\mathfrak{g}^*)$ at any $\nu\in \mathfrak{g}^*$, where ${\sigma}_\mu(e)$
is the value of the  left-invariant  2-form ${\sigma}_\mu$ at $e\in G$ on the pair of tangent vectors  $\frac{\delta f}{\delta \nu}, \frac{\delta g}{\delta \nu}\in T_eG=\mathfrak{g}$, and $\beta_\mu := \pi_G^* \sigma_\mu$ is the pullback
of ${\sigma}_\mu$ to $T^*G$.
\end{theorem}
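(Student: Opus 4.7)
My plan is to carry out the reduction in left trivialization, which turns the left $G$-action into translation on the first factor and makes the magnetic term transparently $G$-invariant.

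First I would use left translation to identify $T^{*}G \cong G \times \mathfrak{g}^{*}$ via $(g,\nu) \mapsto T_{e}^{*}L_{g^{-1}}\nu \in T_{g}^{*}G$. Under this trivialization the left $G$-action on $T^{*}G$ becomes $h\cdot(g,\nu) = (hg,\nu)$, so the quotient $(T^{*}G)/G$ is canonically $\mathfrak{g}^{*}$ and the projection $T^{*}G \to \mathfrak{g}^{*}$ reads $(g,\nu)\mapsto \nu$. A smooth $f\in C^{\infty}(\mathfrak{g}^{*})$ therefore pulls back to the left-invariant function $\tilde f(g,\nu) := f(\nu)$ on $T^{*}G$, and the reduced Poisson bracket is determined by $\{f,g\}_{\mu}(\nu) = \{\tilde f,\tilde g\}_{\omega_{\mu}}(e,\nu)$ evaluated at the identity fiber, where $\{\,,\,\}_{\omega_{\mu}}$ is the Poisson bracket induced by the symplectic form $\omega_{\mu}=\omega_{can}-\beta_{\mu}$.

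Next I would split this bracket into two pieces according to the decomposition of $\omega_{\mu}$. For the canonical piece I would invoke the classical Lie--Poisson reduction of $T^{*}G$ by the left action: the Poisson bracket induced by $\omega_{can}$ on left-invariant functions $\tilde f,\tilde g$ restricts on $\mathfrak{g}^{*}=\{e\}\times \mathfrak{g}^{*}$ to the minus Lie--Poisson bracket
\[
\{\tilde f,\tilde g\}_{\omega_{can}}(e,\nu) = -\bigl\langle \nu,\,[\delta f/\delta\nu,\,\delta g/\delta\nu]\bigr\rangle,
\]
recovering the first summand in \eqref{bracket}. For the magnetic piece I would use that $\sigma_{\mu}$ is left-invariant on $G$ and $\beta_{\mu}=\pi_{G}^{*}\sigma_{\mu}$, so $\beta_{\mu}$ depends only on the horizontal $G$-directions at each point $(g,\nu)$. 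Computing the Hamiltonian vector fields $X_{\tilde f}, X_{\tilde g}$ with respect to $\omega_{\mu}$ and projecting onto $TG$, one finds that at the identity the $TG$-components are precisely $\delta f/\delta\nu,\,\delta g/\delta\nu \in \mathfrak{g}=T_{e}G$; pairing with $\beta_{\mu}$ contributes exactly $-\sigma_{\mu}(e)(\delta f/\delta\nu,\delta g/\delta\nu)$, which is the second summand.

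The main technical step, and the one I would spell out most carefully, is controlling the Hamiltonian vector fields of $\tilde f,\tilde g$ with respect to the twisted form $\omega_{\mu}$, since the magnetic term changes these vector fields and, a priori, could alter the computation of the Lie--Poisson part as well. The key observation that makes everything consistent is left-invariance: both $\omega_{can}$ (in the left trivialization) and $\beta_{\mu}$ are left-invariant, and $\tilde f,\tilde g$ are left-invariant, so the whole computation descends to a calculation at the single point $(e,\nu)$, where the Lie--Poisson contribution decouples cleanly from the contribution of $\sigma_{\mu}(e)$. Assembling the two pieces yields formula \eqref{bracket}, which is exactly the Poisson bracket on $\mathfrak{g}^{*}$ coming from the 2-cocycle $\Sigma(\xi,\eta):=\sigma_{\mu}(e)(\xi,\eta)$; this cocycle is the source of the central extension $\widehat{\mathfrak g}$ appearing in the subsequent theorem.
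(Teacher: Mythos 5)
Your argument is correct, but note that the paper does not prove this statement at all: it is quoted verbatim as Theorem~7.2.1 of \cite{mars2} (Hamiltonian reduction by stages), so there is no in-paper proof to match. What you have written is essentially the standard proof of that cited result: left-trivialize $T^*G\cong G\times\mathfrak g^*$ so that the left action becomes translation in the first factor, identify the Poisson-reduced space with $\mathfrak g^*$, and observe that the bracket of two invariant functions splits into the canonical contribution (handled by classical Lie--Poisson reduction, giving the minus Lie--Poisson bracket) plus a purely ``magnetic'' contribution evaluated at $(e,\nu)$. One point in your write-up deserves sharper phrasing: the reason the bracket splits exactly, despite the Poisson bivector being the \emph{inverse} of $\omega_\mu$ and hence not additive in $\omega_\mu$, is not left-invariance but the fact that $\beta_\mu=\pi_G^*\sigma_\mu$ is horizontal (a pullback from the base). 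Because of this, the fiber component of $X_{\tilde f}$ is the only part that changes, the $TG$-component of $X_{\tilde f}$ remains the fiber derivative $\delta f/\delta\nu$, and the correction to the bracket is exactly $-\sigma_\mu(\mathbb F\tilde f,\mathbb F\tilde g)$ with no cross-terms polluting the Lie--Poisson part. You do in fact carry out this Hamiltonian vector field computation, so the proof stands; left-invariance then plays its correct, secondary role of letting you evaluate everything at the single point $(e,\nu)$. Your approach has the merit of making the origin of the cocycle $\sigma_\mu(e)$ completely explicit, which is exactly what the paper needs for Theorem~\ref{thm:group-ext}.
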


\begin{remark}
The above Poisson bracket is the Lie-Poisson bracket of the dual $\widehat{\mathfrak{g}}^*$ of the central extension
$\widehat{\mathfrak{g}}$ of the Lie algebra ${\mathfrak{g}}$ by means of the ${\mathfrak{t}}$-valued
2-cocycle $\sigma$, such that $\langle \sigma, \mu\rangle :={\sigma}_\mu(e)$.
Namely, the Lie algebra $\widehat{\mathfrak{g}}$ is the direct sum ${\mathfrak{g}}\oplus {\mathfrak{t}}$,
as a vector space, with  the commutator
$$
[(u,a), (v,b)]_{\widehat{\mathfrak{g}}}:= ([u,v]_{\mathfrak{g}}, {\sigma}(u,v))
$$
for  $u,v\in \mathfrak{g}$ and $a,b\in {\mathfrak{t}}$.
It turns out that under certain integrality conditions, the space $Q$ gives a realization of the corresponding
centrally extended group $\widehat G$.
\end{remark}

For the right action of $G$ the bracket changes sign.

\begin{theorem}\label{thm:group-ext}
Let $G$ be a group equipped with a closed integral left-invariant 2-form $\sigma_\mu/2\pi$. Then the $\mathbb T$-bundle $Q$
over the group $G$ with the curvature form $\sigma_\mu$ can be canonically identified with the
central extension $\widehat G$
of the group $G$ by means of $\mathbb T$, where the Lie algebra 2-cocycle is ${\sigma}_\mu(e)$, i.e.
its value on a pair of Lie algebra elements $\xi$ and $\eta$ is ${\sigma}_\mu(e)(\xi,\eta)$.
\end{theorem}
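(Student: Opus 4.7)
The plan is to construct a group structure on $Q$ by lifting left translations of $G$ to connection-preserving bundle automorphisms, and then to compute the infinitesimal 2-cocycle of the resulting extension. The starting data is the principal $\mathbb T$-bundle $\pi:Q\to G$ whose existence is guaranteed by the integrality hypothesis on $\sigma_\mu/2\pi$: by Weil's theorem, $Q$ carries a connection 1-form $\mathcal{A}\in\Omega^1(Q,\mathfrak t)$ with curvature $\pi^*\sigma_\mu = \mathbf d\mathcal{A}$. Fix a basepoint $e_Q\in\pi^{-1}(e)$ in the fiber over the identity $e\in G$.

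The key step is to lift each left translation $L_g:G\to G$ to a bundle map $\widetilde L_g:Q\to Q$ that preserves the connection $\mathcal{A}$. Since $\sigma_\mu$ is left-invariant, $L_g^*\sigma_\mu=\sigma_\mu$, so the pullback bundle $L_g^*Q$ carries a connection with the same curvature as $\mathcal{A}$. Hence $L_g^*Q$ and $Q$ are isomorphic as bundles with connection, and the isomorphism is unique up to the $\mathbb T$-fiber action. I would pin down this ambiguity by parametrizing the lifts by $Q$ itself: for each $q\in Q$, let $\widetilde L_q:Q\to Q$ be the unique connection-preserving lift of $L_{\pi(q)}$ sending $e_Q$ to $q$. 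The multiplication on $Q$ is then defined by
\[
q_1\cdot q_2 := \widetilde L_{q_1}(q_2),
\]
with identity $e_Q$ and projection $\pi$ a homomorphism onto $G$ with kernel $\mathbb T$.

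To check that this is a well-defined Lie group structure, I would verify associativity by showing $\widetilde L_{q_1}\circ\widetilde L_{q_2}=\widetilde L_{q_1\cdot q_2}$. Both sides are connection-preserving lifts of $L_{\pi(q_1)\pi(q_2)}$ sending $e_Q$ to the same point $q_1\cdot q_2$, so by the uniqueness clause above they coincide. The $\mathbb T$-fiber lies in the center because the $\mathbb T$-action and any $\widetilde L_q$ commute: both are connection-preserving bundle automorphisms and their commutator is a vertical connection-preserving map fixing $e_Q$, hence the identity. For the Lie algebra cocycle, I would differentiate: horizontal lifts of left-invariant vector fields $\xi,\eta\in\mathfrak g$ to $Q$ satisfy $[\xi^h,\eta^h]=[\xi,\eta]^h - \sigma_\mu(e)(\xi,\eta)\,\partial_\theta$ by the standard structure equation $\mathbf d\mathcal{A}(\xi^h,\eta^h)=-\mathcal{A}([\xi^h,\eta^h])$ combined with $\mathbf d\mathcal{A}=\pi^*\sigma_\mu$. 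Identifying $\xi^h$ and $\partial_\theta$ with generators of $\widehat{\mathfrak g}=\mathfrak g\oplus\mathfrak t$ yields exactly the bracket $[(\xi,0),(\eta,0)]_{\widehat{\mathfrak g}}=([\xi,\eta]_{\mathfrak g},\sigma_\mu(e)(\xi,\eta))$ from Section~\ref{sect:cenExtention}.

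The main obstacle is the coherent lifting in the second step: one must show that the parametrization $q\mapsto\widetilde L_q$ is itself equivariant with respect to the $\mathbb T$-action and compatible with composition, so that the definition $q_1\cdot q_2=\widetilde L_{q_1}(q_2)$ indeed produces an associative group law rather than one that is only associative up to a $\mathbb T$-valued cocycle. This compatibility is precisely what the integrality hypothesis supplies: it guarantees that the holonomies of $\mathcal{A}$ around the composition triangles $L_{g_1}L_{g_2}(e)\leftrightarrow L_{g_1g_2}(e)$ trivialize, rather than merely taking values in $\mathbb R/\mathbb Z$. Once this is in place, the identification $Q\cong\widehat G$ is canonical given the basepoint $e_Q$, and any two choices of basepoint differ by an element of $\mathbb T$ acting on $\widehat G$ by an inner automorphism from the center, hence yield the same abstract central extension.
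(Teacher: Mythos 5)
Your construction goes by a genuinely different route than the paper's. The paper follows Pressley--Segal: it builds $\widehat G$ abstractly as equivalence classes of triples $(g,u,p)$ (group element, circle element, path from $e$ to $g$) using the surface integral $C(\ell)=\exp(i\int_{\partial^{-1}\ell}\sigma_\mu)$, and only afterwards identifies each class with a point of $Q$ via horizontal lift, the equivalence relation matching the holonomy of the mechanical connection. You instead put the group law directly on $Q$ by realizing each element $q$ as the unique connection-preserving lift $\widetilde L_q$ of $L_{\pi(q)}$ with $\widetilde L_q(e_Q)=q$ --- the Kostant prequantization picture. This is cleaner in that associativity and centrality follow from the uniqueness of lifts rather than from a check on path concatenations, and the infinitesimal computation $\mathcal A([\xi^h,\eta^h])=-\mathbf d\mathcal A(\xi^h,\eta^h)=-\sigma_\mu(e)(\xi,\eta)$ identifies the cocycle more transparently than the paper does (which leaves that step implicit). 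Watch the sign: your displayed bracket of horizontal lifts carries a minus sign on the cocycle while your stated conclusion carries a plus; one of the two conventions has to give. Also, a basepoint change is not ``an inner automorphism from the center'' (those are trivial); it is a fiberwise translation intertwining the two products, which still gives the same extension class.

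There is one genuine gap, and it sits exactly where you wave at Weil's theorem: the step ``$L_g^*\sigma_\mu=\sigma_\mu$, hence $L_g^*Q$ and $Q$ are isomorphic as bundles with connection.'' Equality of curvatures does \emph{not} imply gauge equivalence of connections: two connections on $Q$ with the same curvature differ by a flat connection, i.e.\ by an element of $\mathrm{Hom}(\pi_1(G),\mathbb T)$, so the connection-preserving lift of $L_g$ exists only if the \emph{holonomy} of $\mathcal A$, not merely its curvature, is invariant under $L_g$. For simply connected $G$ this is automatic and your proof closes; for general $G$ it is precisely the existence of the lift that can fail, and your final paragraph invokes integrality to address the wrong worry (coherence of the composition, which your uniqueness argument already disposes of) rather than this one. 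To be fair, the paper's proof has the same blind spot --- $C(\ell)$ is only defined for loops $\ell$ that bound, so it too implicitly works on contractible loops --- but since you flagged ``the main obstacle'' explicitly, the flag should point at the existence of the lifts for non-contractible holonomy, not at associativity.
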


\begin{proof} The proof is based on a version of Proposition 4.4.2 of \cite{segal} adjusted to the setting at hand.
In fact, one can explicitly construct $\widehat G$ and identify it with $Q$, the $\mathbb T$-bundle over $G$.
Namely, first for any oriented loop $\ell$ in $G$  one associates an element
$C(\ell)=\exp (i\int_{\partial^{-1}\ell}\sigma_\mu)$, where $\partial^{-1}\ell$ is an oriented 2D surface in $G$ bounded
by $\ell$. The value $C(\ell)$ is well-defined, since for two different surfaces with the same boundary
the integrals of $\sigma_\mu$ for an integral 2-form $\sigma_\mu/2\pi$ differ by a multiple of $2\pi$.

The map $\ell\mapsto C(\ell)$ is independent of parametrization of $\ell$, additive, and $G$-invariant.
It defines a central extension $\widehat G$ of the group $G$ by $\mathbb T$ as a set of triples $(g,u, p)$, where
$g\in G$, $u\in \mathbb T$ and $p$ is a path in $G$ from $e$ to $g$, modulo the following equivalence.
Two triples  $(g,u, p)$ and  $(g',u', p')$ are equivalent if $g'=g$ and $u'=C(p'\cup p^{-1})u$. The composition
is $(g_1,u_1, p_1)\circ (g_2,u_2, p_2)= (g_1g_2, u_1u_2, p_1\cup g_1(p_2)))$.

Recall that the space $Q$ with an invariant metric has a structure of 
a $\mathbb T$-bundle with mechanical connection. Then a triple
$(g,u, p)$ modulo equivalence can be interpreted as the following point in $Q$: it is the point in the
$\mathbb T$-fiber over  $g\in G$, obtained from the point $(e,u)$ of the $\mathbb T$-fiber over $e\in G$
by a horizontally lifted path $p$ from $e$ to $g$. Then the equivalence of triples stands for their correspondence to the same point in $Q$, since the form $\sigma_\mu$ is the curvature of the mechanical connection, while
the formula $u'=C(p'\cup p^{-1})u$ describes the holonomy of the connection over a closed loop.
\end{proof}

\begin{remark}
Theorems \ref{thm:poisson} and \ref{thm:group-ext} can be extended to  
the case of a torus $\mathbb T$-bundle $Q$ over $G$, where $\mathbb T=T^k$. In Theorem  \ref{thm:group-ext}
one realizes $Q$ as a group  central extension of $G$ by $\mathbb T$ by applying the above consideration to 
the ``coordinate 2-forms"  $\sigma_\mu=\langle \sigma, \mu\rangle $ of the $\mathfrak t$-valued 2-form $\sigma$.
\end{remark}

\begin{remark}
Return to the 2-cocycle $\beta_\mu$ on the Lie algebra $\mathfrak g$, which defines
the central extension and the magnetic term. In many examples, this 2-cocycle
is a 2-boundary, i.e. the 2-form $\sigma_\mu$ on the Lie algebra can be represented as a linear
functional  of the Lie algebra commutator,  $\sigma_\mu(\xi, \eta)=L([\xi,\eta])$
for some element $L\in \mathfrak g^*$.
In that case, the corresponding Poisson structure on $\mathfrak g^*$ is the linear Lie-Poisson structure on the dual space $\mathfrak g^*$ shifted to the point $L$.
The associated Euler equation also manifests a certain shift, observed, e.g.
as a Stokes drift velocity related to surface waves in the  Craik-Leibovich equation, cf. Section \ref{sect:CL}.
\end{remark}

\begin{remark}
When considering dynamics on the reduced space $T^*G$, in order to use the second reduction
over the $G$-action one has to confine to the natural systems with effective potential independent of $q$, i.e.
$U_\mu(q)=const$. The latter are geodesic flows for the invariant metric on $G$ defined by the inertia operator
$\mathbb I_G:\mathfrak g\to \mathfrak g^*$. The second reduction 
 defines the Euler equations for the quadratic Hamiltonian
$H(p):=\frac 12 (p,p)_e=\frac 12 \langle p, \mathbb I_G^{-1}p\rangle$
on the dual $\widehat{\mathfrak{g}}^*$ of  centrally extended Lie algebra $\widehat{\mathfrak{g}}$.
\end{remark}

%

\medskip

\section{Reminder on averaging and examples}\label{sec:ODEaveraging}

\subsection{Averaging in one-frequency Hamiltonian systems}\label{subsec:averHam}
Consider a Hamiltonian system with $\ell+1$ degrees of freedom and Hamiltonian of the form
$H(q,p, I, \phi)=H_0(I)+\epsilon H_1(q,p, I, \phi)$, where $\phi ({\rm mod }\, 2\pi)\in \mathbb T$, 
while $H$ is $2\pi$-periodic in $\phi$, and $(q,p, I)\in D\subset \mathbb R^{2\ell+1}$. (Such perturbations of properly degenerate Hamiltonian  systems are typical in celestial mechanics.)
The corresponding Hamiltonian equations for the standard symplectic structure are as  follows:
\begin{equation}\label{eq:Ham_eq}
\left\{
  \begin{array}{l}
        ~ \dot q = \quad \epsilon\,{\partial H_1}/{\partial p} \qquad\quad
		\dot I = \qquad\qquad  -\epsilon \,{\partial H_1}/{\partial \phi},\\\
		              \dot p = -\epsilon\,{\partial H_1}/{\partial q}, 
		              \qquad\quad\dot \phi=\,{\partial H_0}/{\partial I}+\epsilon\,{\partial H_1}/{\partial I}.
\end{array} \right.
\end{equation}

\begin{definition}
The {\it averaged system} for the above Hamiltonian $H=H_0+\epsilon H_1$ is the system of $2\ell+1$ equations:
\begin{equation}\label{eq:Ave_Ham_eq}
\left\{
  \begin{array}{l}
				~\dot Q = \quad \epsilon \,{\partial \bar{H}_1}/{\partial P}, \qquad\dot J = 0,\\\
        \dot P =  -\epsilon \,{\partial \bar{H}_1}/{\partial Q},
\end{array} \right.
\end{equation}
where $\bar H_1(Q, P,J):=\frac{1}{2\pi}\int_0^{2\pi} H_1(Q, P,J, \phi)\,d\phi$.
\end{definition}

Since there is no evolution of $J$ in the averaged system, one can  fix it and regard $J$ as a parameter
for the Hamiltonian system with $\ell$ degrees of freedom, where
$\bar{H}(Q,P)=\bar{H}_J(Q,P)=H_0(J)+\epsilon \bar{H}_1(Q,P,J)$.
\medskip

Let $(q,p, I)$ belong to a domain $D\subset \mathbb R^{2\ell+1}$, and $D_\delta\subset D$  stands for
a subdomain whose $\delta$-neighbourhood belongs to $D$. Assume that the Hamiltonian
$H$ is $C^3$-bounded for
$(q, p, I, \phi) \in D\times \mathbb T$, as well as ${\partial H_0(I)}/{\partial I}>C>0$ in $D$ and
$(Q(t), P(t), J(t))\in D_\delta$ for all $0\le t\le 1/\epsilon$.

\begin{theorem}\label{thm:Ave_Ham}{\rm (cf. \cite{ArnMM, arko})}
For sufficiently small positive $\epsilon$ (i.e. $0<\epsilon<\epsilon_0$) solutions of the actual system \eqref{eq:Ham_eq} and averaged system \eqref{eq:Ave_Ham_eq}
with the same initial conditions $(q(0), p(0), I(0))=(Q(0),P(0), J(0))$
remain $\epsilon$-close to each other for $0\le t\le 1/\epsilon$:
$|I(t)-I(0)|<C_0\epsilon$ and $|q(t)-Q(t)|+|p(t)-P(t)|<C_0\epsilon$, where $C_0$ does not depend on $\epsilon$.
\end{theorem}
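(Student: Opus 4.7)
The plan is to eliminate the oscillating part of $H_1$ by a near-identity canonical transformation, after which the transformed system matches the averaged one to order $\epsilon^2$, and then to close the argument by Gronwall. First, split $H_1 = \bar H_1(q,p,I) + \widetilde H_1(q,p,I,\phi)$ where $\widetilde H_1$ has zero mean in $\phi$. Seek a canonical change of variables $(q,p,I,\phi) \mapsto (\tilde q,\tilde p, \tilde I, \tilde\phi)$ generated by $W(q,\tilde p, I, \tilde\phi) = q\tilde p + I\tilde\phi + \epsilon\, S(q, \tilde p, I, \tilde\phi)$, where $S$ is chosen to kill the angle-dependent first-order term. The necessary cohomological equation
\[
\omega(I)\,\frac{\partial S}{\partial \tilde\phi}(q,\tilde p, I, \tilde\phi) = \widetilde H_1(q,\tilde p, I, \tilde\phi),
\qquad \omega(I) := \partial H_0/\partial I,
\]
is solvable by integration in $\tilde\phi$ because $\widetilde H_1$ has zero mean and $\omega(I)\ge C>0$ on $D$. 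The hypothesis $H\in C^3$ then provides $C^2$-bounds on $S$ and its derivatives.

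Next, I would verify directly that in the new coordinates the Hamiltonian takes the form
\[
\widehat H(\tilde q, \tilde p, \tilde I, \tilde\phi) = H_0(\tilde I) + \epsilon\,\bar H_1(\tilde q,\tilde p,\tilde I) + \epsilon^2\, R(\tilde q, \tilde p, \tilde I, \tilde\phi, \epsilon),
\]
with a remainder $R$ uniformly bounded on a slightly shrunken domain. The implicit-function step for inverting $W$ succeeds for $\epsilon$ small, and gives the crucial near-identity estimate $\lvert (\tilde q,\tilde p,\tilde I)-(q,p,I)\rvert = O(\epsilon)$, so the original trajectory and its transform stay $O(\epsilon)$-close so long as both remain in $D$.

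Then I would compare the transformed trajectory $(\tilde q,\tilde p,\tilde I)(t)$ with the averaged trajectory $(Q,P,J)(t)$. For the action, $\dot{\tilde I} = -\epsilon^2\,\partial R/\partial\tilde\phi = O(\epsilon^2)$, so $\lvert \tilde I(t)-\tilde I(0)\rvert = O(\epsilon)$ for $0\le t\le 1/\epsilon$. For the pair $(\tilde q, \tilde p)$ vs.\ $(Q,P)$, subtracting the two systems yields a difference equation whose right-hand side has Lipschitz norm $O(\epsilon)$ in the fiber and an inhomogeneity of size $O(\epsilon^2)$ from $R$ and from the small discrepancy between $\tilde I$ and $J$; Gronwall's inequality applied over a time interval of length $1/\epsilon$ gives an $O(\epsilon)$ bound, because the exponential factor is $e^{O(\epsilon)\cdot 1/\epsilon}=O(1)$ and the accumulated inhomogeneity is $O(\epsilon^2)\cdot (1/\epsilon)=O(\epsilon)$. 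Combining with the $O(\epsilon)$ gap between $(q,p,I)$ and $(\tilde q,\tilde p,\tilde I)$ yields the claimed bounds.

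The main technical obstacle is keeping all relevant trajectories inside $D$ throughout $[0,1/\epsilon]$, so that the $C^3$-bounds on $H$, the nondegeneracy $\omega\ge C$, and hence the estimates on $S$ and $R$ apply uniformly. This is precisely the role of $D_\delta$: one must choose $\epsilon_0$ small enough so that the $O(\epsilon)$ drift both of $(\tilde q,\tilde p,\tilde I)$ from $(q,p,I)$ and of $(Q,P,J)$ from its initial value together do not exceed $\delta$; a standard continuity/bootstrap argument then closes the loop and produces the universal constant $C_0$.
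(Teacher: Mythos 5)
Your proposal is correct and follows essentially the same route the paper indicates: a near-identity canonical transformation (generating function solving the homological equation $\omega\,\partial S/\partial\tilde\phi=\widetilde H_1$, solvable by the zero-mean and $\omega\ge C>0$ hypotheses) that conjugates the system to the averaged one modulo $\epsilon^2$, followed by a Gronwall estimate over $[0,1/\epsilon]$. The paper only sketches this argument and defers to \cite{ArnMM}, so your write-up is simply a fuller version of the same classical proof.
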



Proof of this theorem consists of constructing a canonical transformation $\epsilon$-close to the identity and mapping the original system to the averaged one modulo $\epsilon^2$-terms. Then for the time
$0\le t\le 1/\epsilon$  solutions of the averaged system remain $\epsilon$-close to those of
the original one, see \cite{ArnMM}.


\begin{remark}
Consider now a fast-oscillating nonautonomous Hamiltonian system with ``$\ell$ and a half" degrees of freedom, whose Hamiltonian is $H=H(p,q,\omega t)$ with  high frequency $\omega=\mu/\epsilon$. The associated Hamiltonian equations are $\dot q={\partial H}/{\partial p}$ and $\dot p= -{\partial H}/{\partial q}$.
The fast variable $\phi=\omega t$ can be regarded as a new independent space variable by passing to the new autonomous Hamiltonian system for  $\widetilde H=\omega I+H(p,q,\phi)$
with $\ell+1$ degrees of freedom, where variable $I$ is conjugate to $\phi$.
Combined with the reparametrization $t\mapsto \tau=t/\epsilon$ this
leads to the system of the above type  (where now the upper dot stands for the derivative in the fast time $\tau$):
\begin{equation}
\left\{
  \begin{array}{l}
~ \dot q=\quad\epsilon\,{\partial H}/{\partial p}\,, \quad  \dot I= -\epsilon\,{\partial H}/{\partial \tau},\\\
 \dot p= -\epsilon\,{\partial H}/{\partial q}\,, \quad \dot \phi=~ \mu.
\end{array} \right.
\end{equation}
\end{remark}


\subsection{Averaging in natural systems}\label{sect:Ave_natHamiltonian}
In this section we study averaging in natural systems on the cotangent bundle of a principle circle bundle $\pi: Q\rightarrow B$. 
The general form of a natural Hamiltonian function on $T^*Q$ is 
$$
H(x,y)=\frac 12 (y,\,y)_x+U(x)
$$
for $(x,y)\in T^*_xQ$.

We start by considering natural systems on the cotangent bundle 
$T^*(\mathbb R^\ell\times  \mathbb T)$ of a direct product with Hamiltonians
\begin{equation}\label{eq:natHamiltonian}
H(q,\phi; p,\gamma)=\frac 12\left((p,\gamma), (p,\gamma)\right)_{(q,\phi)}+U(q,\phi),
\end{equation}
where $(q,\phi)\in\mathbb R^\ell\times  \mathbb T$ and $(p,\gamma)\in T_{(q,\phi)}^*(\mathbb R^\ell \times \mathbb T)$. 
Assume that the function $H(q,\phi; p,\gamma)$ is $2\pi$-periodic in $\phi$, 
$U(q,\phi)=U_0(q)+\epsilon U_1(q,\phi),$
 and the metric on $\mathbb R^\ell \times \mathbb T$ has the form
\begin{equation}\label{eq:metric}
\left( (p,\gamma), (p,\gamma) \right)_{(q,\phi)}=p\cdot p+2\gamma a(q,\phi)\cdot p+h(q,\phi)\,\gamma^2,
\end{equation}
where  dot  stands for the Euclidean inner product and where 
the corresponding coefficients $a$ and $h$ have the following expansions in $\epsilon$ as $\epsilon\to 0$:
$$
a(q,\phi)=a_0(q)+\epsilon a_1(q,\phi),\;\; h(q,\phi)=h_0(q)+\epsilon h_1(q,\phi),
$$
with  functions $a_1(q,\phi),\;h_1(q,\phi)$ and $U_1(q,\phi)$ of zero mean with respect to $\phi$.

The Hamiltonian equations for this Hamiltonian function $H(q,\phi; p,\gamma)$ and 
 symplectic structure $\omega=(1/\epsilon)\,dq\wedge dp+d\phi\wedge d\gamma$ 
on $T^*(\mathbb R^\ell\times \mathbb T)$ are
\begin{equation}\label{eq:natHamiltonian_eq}
\left\{
  \begin{array}{l}
				~ \dot q = \;\;\epsilon \,(p+\gamma a(q,\phi)),\\\
       \dot \phi =  a(q,\phi)\cdot p+h(q,\phi)\,\gamma, \\\
 			\dot p =  -\epsilon\, \partial (\gamma a(q,\phi)\cdot p+(1/2)\,h(q,\phi)\gamma^2 + U(q,\phi))/{\partial q}\,, \\\
        \dot \gamma = -\epsilon\, \partial (\gamma a_1(x,\phi)\cdot p+(1/2)\,h_1(q,\phi)\gamma^2 +U_1(q,\phi))/{\partial \phi}.
\end{array} \right.
\end{equation}
In these equations $\phi$ is the fast variable. 
Below we prove that the averaged Hamiltonian for (\ref{eq:natHamiltonian}) is
\begin{equation}\label{eq:Ave_natHamiltonian}
\bar{H}(Q,P,\mu)=\frac 12\,P\cdot P+\mu \,a_0(Q)\cdot P+\frac 12\,\mu^2 h_0(Q)+U_0(Q)\,,
\end{equation}
where the part $(1/2)\,\mu^2 h_0(Q)+U_0(Q)$ is related to  an effective potential, while the term $\mu \,a_0(Q)\cdot P$
linear in impulses  is related to a magnetic-gyroscopic-like force. Namely, one has the following statement,
which is an adaptation of Theorem \ref{thm:Ave_Ham} to the system \eqref{eq:natHamiltonian_eq}.

\begin{theorem}\label{thm:Ave_natHamiltonian}
For sufficiently small $\epsilon>0$  solutions for the original Hamiltonian  \eqref{eq:natHamiltonian} and the averaged Hamiltonian \eqref{eq:Ave_natHamiltonian}
with the same initial conditions $(q(0), p(0), \gamma(0))=(Q(0),P(0),\mu(0))$
remain $\epsilon$-close to each other for $0\le t\le 1/\epsilon$:
$|\gamma(t)-\mu(t)|+|q(t)-Q(t)|+|p(t)-P(t)|<C_0\epsilon$, where $C_0$ does not depend on $\epsilon$.
\end{theorem}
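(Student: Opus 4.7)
The plan is to reduce Theorem \ref{thm:Ave_natHamiltonian} to the classical one-frequency averaging statement of Theorem \ref{thm:Ave_Ham} by recognizing the natural-system equations \eqref{eq:natHamiltonian_eq} as a Hamiltonian slow-fast system with slow variables $(q,p,\gamma)\in \mathbb R^{2\ell+1}$ and single fast angle $\phi\in\mathbb T$. Unlike in Theorem \ref{thm:Ave_Ham}, where the unperturbed frequency $\partial H_0/\partial I$ depends only on the action, here $\dot\phi=a(q,\phi)\cdot p+h(q,\phi)\gamma$ depends on all slow variables; however, the standard averaging machinery still applies as long as this frequency is bounded below on the trajectory. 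The first step is therefore to restrict to a subdomain $D_\delta\subset D$ on which $|a_0(q)\cdot p+h_0(q)\gamma|\ge c>0$, which holds automatically for $\epsilon$ small, $h_0$ bounded below, and initial data with $\mu(0)$ of order one.

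Next, I would compute the $\phi$-average of the Hamiltonian \eqref{eq:natHamiltonian} directly. Because $a_1,h_1,U_1$ have zero $\phi$-mean by hypothesis, the cross terms collapse and one obtains
\[
\bar H(Q,P,J)=\tfrac 12\,P\cdot P + J\,a_0(Q)\cdot P + \tfrac 12\,J^2 h_0(Q) + U_0(Q),
\]
which is exactly \eqref{eq:Ave_natHamiltonian} with $J=\mu$. Since $\bar H$ is independent of $\phi$, the corresponding Hamilton equations with respect to the scaled symplectic form $(1/\epsilon)\,dq\wedge dp + d\phi\wedge d\gamma$ yield $\dot J=0$ and the claimed averaged system for $(Q,P)$, with $\mu$ playing the role of a conserved parameter.

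The heart of the proof is to construct a near-identity canonical change of variables $(q,p,\gamma,\phi)\mapsto (Q,P,J,\Phi)$ generated by $S=qP+\phi J+\epsilon S_1(q,P,J,\phi)$ that kills the $\phi$-dependence of the Hamiltonian modulo $O(\epsilon^2)$. Concretely, one solves the cohomological equation
\[
\bigl(a_0(q)\cdot P + h_0(q)J\bigr)\,\partial_\phi S_1 = H(q,P,J,\phi) - \bar H(q,P,J)
\]
for a $\phi$-periodic, zero-mean $S_1$, whose Fourier-series solution is uniformly $C^2$-bounded on $D_\delta$ precisely thanks to the non-degeneracy established in step one. After the transformation the Hamiltonian reads $\bar H+\epsilon^2 R$ with $R$ bounded, so the transformed and purely averaged trajectories differ by $O(\epsilon^2)\cdot (1/\epsilon)=O(\epsilon)$ on $[0,1/\epsilon]$ by Gronwall's inequality. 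Inverting the $O(\epsilon)$-small canonical transformation then returns the bound $|\gamma(t)-\mu(t)|+|q(t)-Q(t)|+|p(t)-P(t)|<C_0\epsilon$ in the original coordinates.

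The main obstacle is the uniform non-degeneracy $|\dot\phi|\ge c>0$ along both the true and averaged trajectories, required to invert the operator $\partial_\phi$ against $(a_0\cdot P+h_0 J)$ and control all derivatives of $S_1$; once this is in place the argument reduces to the canonical-transformation proof of Theorem \ref{thm:Ave_Ham} following \cite{ArnMM, arko}. The specifically \emph{natural} structure of the system enters only in identifying $\bar H$ with the explicit expression \eqref{eq:Ave_natHamiltonian}, and the zero-mean assumption on $a_1,h_1,U_1$ ensures that the expected averaged Hamiltonian matches the formal $\phi$-integration of $H$ at the first nontrivial order.
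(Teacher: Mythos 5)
Your proof is correct and follows essentially the same route as the paper: the paper simply rewrites the Hamilton equations so as to separate the $O(\epsilon)$ mean part from the zero-mean oscillatory and $O(\epsilon^2)$ remainders and then invokes the classical one-frequency averaging Theorem \ref{thm:Ave_Ham}, whereas you unfold that citation by carrying out the canonical-transformation argument explicitly, including the frequency nondegeneracy $|a_0(q)\cdot p+h_0(q)\gamma|\ge c>0$ that the paper leaves implicit. One bookkeeping remark: since $H-\bar H=\epsilon\bigl(\gamma a_1\cdot p+\tfrac12 h_1\gamma^2+U_1\bigr)$ is itself $O(\epsilon)$, with the generating function $qP+\phi J+\epsilon S_1$ the homological equation should read $\bigl(a_0(q)\cdot P+h_0(q)J\bigr)\,\partial_\phi S_1=-\bigl(H-\bar H\bigr)/\epsilon$, so that the $\epsilon S_1$-correction actually cancels the first-order oscillation rather than an $O(\epsilon^2)$ one.
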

\begin{proof}
 The Hamiltonian equations \eqref{eq:natHamiltonian_eq} can be rewritten in the following form
\begin{equation}\label{eq:natHamiltonian_eq2}
\left\{
  \begin{array}{l}
				~ \dot q = \;\;\epsilon \,(p+\gamma a_0(q))+\epsilon^2 \gamma a_1(q,\phi)\,,\\\
       \dot \phi =  a(q,\phi)\cdot p+h(q,\phi)\,\gamma, \\\
 		\dot p =  -\epsilon\, \partial (\gamma a_0(q)\cdot p+(1/2)\,h_0(q)\gamma^2 + U_0(q))/{\partial q}\\\
	~\qquad\qquad\qquad\qquad \;\; -\epsilon^2 \partial (\gamma a_1(q,\phi)\cdot p+(1/2)\,h_1(q,\phi)\gamma^2 + U_1(q,\phi))/{\partial q}\,, \\\
        \dot \gamma = -\epsilon\, \partial (\gamma a_1(x,\phi)\cdot p+(1/2)\,h_1(q,\phi)\gamma^2 +U_1(q,\phi))/{\partial \phi}.
\end{array} \right.
\end{equation}
These equations differ by $\epsilon^2$-terms from those for  the averaged Hamiltonian \eqref{eq:Ave_natHamiltonian} in the symplectic structure $\Omega=dP\wedge dQ$:
\begin{equation}\label{eq:Ave_natHamiltonian_eq}
\left\{
  \begin{array}{l}
				~ \dot Q = \epsilon\, (P+\gamma a_0(Q))\,,\\\
 		\dot P =  -\epsilon\, \partial (\gamma a_0(Q)\cdot P+(1/2)\,h_0(Q)\gamma^2 + U_0(q))/{\partial Q}\,,\\\
       \dot \mu = 0,
\end{array} \right.
\end{equation}
so according to Theorem \ref{thm:Ave_Ham}, we obtain the required proximity of solutions for the original and averaged equations.
\end{proof}

\begin{remark} \label{rm:Ave_natHamiltonian}
In the shifted  coordinates $(Q, P)\to (Q, P_1:=P+\mu\,a_0(Q))$, 
the averaged Hamiltonian \eqref{eq:Ave_natHamiltonian} becomes
\begin{equation}\label{eq:Ave_natHamiltonian2}
\bar{H}(Q,P_1,\mu)=\frac 12\,P_1\cdot P_1+\frac 12\,\mu^2 (h_0(Q)-a_0(Q)\cdot a_0(Q))+U_0(Q),
\end{equation}
while the symplectic structure becomes
$$
\Omega=d(PdQ)=d((P_1-\mu\,a_0(Q))\,dQ)=dP_1\wedge dQ-\mu\,d(a_0(Q)\,dQ).
$$
One notices that  a new (effective) potential  now includes an additional term, 
$$
\bar U_{\mu}(Q):=\frac 12\,\mu^2 (h_0(Q)-a_0(Q)\cdot a_0(Q))+U_0(Q),
$$ 
while the new symplectic structure acquires the magnetic term $\mu\,d(a_0(Q)\,dQ)$.
\end{remark}

Now we are ready to prove 

\begin{corollary} 
The solutions of the reduced Hamiltonian  system and projections to slow manifold of solutions of the actual system with the same initial conditions remain $\epsilon$-close to each other for $0\le t\le 1/\epsilon$.
\end{corollary}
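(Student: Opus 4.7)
The plan is to combine two ingredients already established in the paper: the quantitative approximation of Theorem \ref{thm:Ave_natHamiltonian} for the flat cotangent bundle, and the identification of the averaged Hamiltonian system (after the momentum shift in Remark \ref{rm:Ave_natHamiltonian}) with the symplectically reduced system produced by Theorem \ref{thm:ave-red}. Once these two pieces are glued, the corollary follows by propagating the closeness estimate through a smooth projection map.

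Concretely, working locally in a trivialization $Q = B \times \mathbb{T}$ with the metric \eqref{eq:metric}, Theorem \ref{thm:Ave_natHamiltonian} directly yields
\[
|\gamma(t) - \mu| + |q(t) - Q(t)| + |p(t) - P(t)| < C_0 \epsilon
\qquad\text{for } 0 \le t \le 1/\epsilon,
\]
between the solution $(q(t),\phi(t),p(t),\gamma(t))$ of the original system \eqref{eq:natHamiltonian_eq} and the solution $(Q(t),P(t),\mu)$ of the averaged system \eqref{eq:Ave_natHamiltonian_eq} starting from the same initial data; here I use that $\dot\mu = 0$. Remark \ref{rm:Ave_natHamiltonian} then exhibits the change of coordinates $P\mapsto P_1:=P+\mu a_0(Q)$ that transforms the averaged Hamiltonian \eqref{eq:Ave_natHamiltonian} into the natural form $\bar H_\mu(Q,P_1)=\tfrac12 P_1\cdot P_1+\bar U_\mu(Q)$ with symplectic structure $\Omega=dP_1\wedge dQ-\mu\,d(a_0(Q)\,dQ)$. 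A direct check identifies this with the system of Theorem \ref{thm:ave-red}: inverting the cometric matrix of \eqref{eq:metric} by Schur complement gives $\mathbb{I}(Q)^{-1}=h_0(Q)-a_0(Q)\cdot a_0(Q)$, so $\bar U_\mu(Q)=\tfrac12\langle\mu,\mathbb{I}(Q)^{-1}\mu\rangle+\overline{U(Q)}^{\mathbb T}$; restricting the cometric to horizontal covectors yields the Euclidean base metric $P_1\cdot P_1$; and $-\mu\,d(a_0\,dQ)$ is the $\mu$-component of the curvature of the mechanical connection, matching $-\beta_\mu$.

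With that identification in place, the projection from the actual trajectory in $T^*Q$ to the reduced phase space $T^*B$ is given in the trivialization by $(q,\phi,p,\gamma)\mapsto (q,\,p+\gamma a_0(q))$, following the prescription of $\mathrm{Shift}_\gamma$ from the proof of Theorem \ref{thm:cotbundle} composed with the quotient by $\mathbb T$. Because this map is smooth (hence Lipschitz on any compact region containing the trajectory) and because the estimate above controls all of $|q-Q|$, $|p-P|$, $|\gamma-\mu|$ simultaneously, the projected trajectory $(q(t),\,p(t)+\gamma(t)a_0(q(t)))$ stays within distance $C_0'\epsilon$ of $(Q(t),P_1(t))$ for $0\le t\le 1/\epsilon$. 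Since $(Q(t),P_1(t))$ is by the identification above the solution of the reduced Hamiltonian system with the appropriate initial conditions, this is the required $\epsilon$-closeness.

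The only real subtlety, rather than a serious obstacle, is that the reduced coordinates are defined by a shift that uses the fixed value $\mu$, whereas along the actual trajectory the fiber momentum $\gamma(t)$ fluctuates by $O(\epsilon)$; however, this discrepancy is already $O(\epsilon)$ and is absorbed into $C_0'$ via the Lipschitz bound on $a_0$. For a nontrivial $\mathbb{T}$-bundle $Q$ one covers a neighborhood of the trajectory by trivializing charts, runs the above argument on each chart, and uses compactness of the (slow) trajectory on the time interval $[0,1/\epsilon]$ of the slow variable to obtain a single uniform constant.
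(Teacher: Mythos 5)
Your proposal is correct and follows essentially the same route as the paper's proof: identify the reduced system of Theorem \ref{thm:ave-red} with the shifted averaged system of Remark \ref{rm:Ave_natHamiltonian} (via the local form of the averaged metric and mechanical connection $\bar{\mathcal A}=a_0(q)\,dq$), and then invoke the closeness estimate of Theorem \ref{thm:Ave_natHamiltonian}. You merely make explicit some steps the paper leaves implicit, namely the Schur-complement computation of $\mathbb I(q)^{-1}$ and the Lipschitz propagation of the estimate through the projection to $T^*B$.
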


\begin{proof}
Comparing the result of Theorem \ref{thm:ave-red} with Remark \ref{rm:Ave_natHamiltonian} we observe that the Hamiltonian \eqref{aveHam} in the theorem coincides with the averaged Hamiltonian of the natural system obtained above.

As a matter of fact, the  consideration of Remark \ref{rm:Ave_natHamiltonian}
can be seen as a local (coordinate) version of the proof of Theorem  \ref{thm:ave-red}. Indeed, the averaged metric in local coordinates can be expressed as
$$
\overline{((p,\mu),\;(p,\mu))}_{(q,g)}^{\mathbb T}=p\cdot p+2\mu \,a_0(q)\cdot p+\mu^2 h_0(q),
$$
where $(q,g)\in B\times\mathbb T$ and $(p,\mu)\in T_{(q,g)}^*( B\times\mathbb T)$,  cf. (\ref{eq:Ave_natHamiltonian}). Therefore the corresponding mechanical connection $\bar{\mathcal{A}} \in \Omega^1(B,\mathbb R)$ in local coordinates is $\bar{\mathcal{A}}=a_0(q)\;dq$. One can see that the symplectic structure and effective potential in Remark \ref{rm:Ave_natHamiltonian} coincide with those in Theorem~\ref{thm:ave-red}.

Since one obtains the same averaged system both via the ``local" proof of Section \ref{sect:Ave_natHamiltonian} and via the ``global" proof of Theorem \ref{thm:ave-red}, then Theorem \ref{thm:Ave_natHamiltonian} guarantees the required  closeness of averaged  and original solutions.
\end{proof}

\begin{remark}
While Theorem \ref{thm:Ave_natHamiltonian} deals with the topologically trivial 
$\mathbb T$-bundle $\mathbb R^\ell\times \mathbb T$,
the  result on the existence of an averaged system holds for a topologically nontrivial bundle as well.
In the general case of a nontrivial $\mathbb T$-bundle $\pi:Q\to B$ 
we assume that the Hamiltonian system has fast motions along fibers and slow motions on the base $B$.
Under the assumption that  the  oscillatory parts of the Hamiltonian are of order $\epsilon$, i.e. 
\begin{equation}\label{eq:cond}
U(q)-\overline{U(q)}^{\mathbb T}\sim \;O(\epsilon) \;\;\text{and}\;\;(p,p)_q-\overline{(p,p)}_q^{\mathbb T}\sim\; O(\epsilon),
\end{equation}
one can  introduce the averaged kinetic energy
$(1/2) \overline{(p,p)}_q^{\mathbb T}$ and averaged potential energy $\overline{U(q)}^{\mathbb T}$ by
averaging the system along the fibers, see Section \ref{sect:averaging}. Then one proves the averaging theorem
by  considering only the local picture of the principal bundle $\pi:Q\to B$,
since any bundle is locally trivial. 
\end{remark}


\subsection{Examples of averaged systems}\label{subsect:examples}
\begin{example}[\textrm{\textit{(A pendulum with rapidly oscillating suspension point)}}]\label{ex:pendulum}
Consider the motion of a pendulum with vertically vibrating suspension point. Set $\theta$ to be the angle of
deviation of the pendulum from the vertical, $a$ and $\omega$ are the amplitude and frequency of the
oscillation of the suspension point, $l$ is the length of the pendulum and $g$ is the acceleration of gravity.
We assume that the amplitude $\epsilon a$ is of order $\epsilon$ and the frequency $\omega={\mu}/{\epsilon}$ is of order ${1}/{\epsilon}$, i.e. the suspension point oscillates with high frequency and small amplitude.
The corresponding potential  is $U(\theta)=-gl \cos\theta$ and the 
Hamiltonian function is
$$
H(\theta,p,t)=\frac 12 \left( \frac pl-a\mu \sin\omega t \sin\theta\right)^2-gl \cos\theta\,.
$$
This system differs from a natural one because of the shift in $p$, and 
the standard reasoning goes as follows, see e.g. \cite{arko}.
 Let $\phi=\omega t$ be the fast variable. In order to get rid of the $\phi$-dependence in the Hamiltonian 
 of vibrating pendulum, we seek for a canonical transformation $(\theta,p)\mapsto(\theta_1,p_1)$ 
 with a generating function $p_1\theta+\epsilon S_1(\theta,p_1,\phi)$, 
 where the function  $S_1$ is $2\pi$ periodic in $\phi$. Then the new Hamiltonian becomes
$$
\mathscr{H}(\theta_1,p_1,\phi)=\mu \frac{\partial S_1}{\partial \phi}+H(\theta_1,p_1,\phi)+O(\epsilon),
$$
as $\epsilon\to 0$. By taking $
S_1(\theta,p_1,\phi)=-( p_1/l)\, a \cos\phi \sin\theta+( 1/8)\, a^2\mu \sin^2\theta \sin 2\phi,
$ we  obtain the following Hamiltonian averaged to the first order in $\epsilon$:
$$
\mathscr{H}(\theta_1,p_1,\phi)=\frac {p_1^2}{2l^2}-gl\cos\theta_1+\frac 14 a^2\mu^2\sin^2\theta_1+O(\epsilon).
$$
Notice the appearance of an additional positive definite quadratic term in the effective potential
$$
U_{\mu}(\theta_1)=-gl\cos\theta_1+( 1/4) a^2\mu^2\sin^2\theta_1\,.
$$
It causes such an interesting dynamical phenomenon as the stability of the upper position of the pendulum.
\end{example}

\begin{example}[\textrm{\textit{(A particle in a rapidly oscillating  potential)}}]\label{ex:rapid}
Consider the motion of a particle in a rapidly oscillating potential, following \cite{cole}. The corresponding
Hamiltonian function is $H(q,p, t)=( 1/2) \,p\cdot p+U(q,  t/{\epsilon})$, where the potential function $U(q,\tau)$
is $2\pi$-periodic with respect to $\tau$.
To obtain an averaged Hamiltonian modulo the third order  in $\epsilon$ one needs to iteratively
apply canonical transformations $(q,p)\to (Q,P)$  four times (see \cite{cole} for details). The results is
$$
\bar{H}(Q,P)=\frac 12\, P\cdot P+\bar U(Q)+\frac{\epsilon^2}{2}\overline{V'\cdot V'}-\epsilon^3\,\overline{S''V'}P,
$$
where $\bar U(q)=\frac 1 {2\pi}\int_0^{2\pi} U(q,\tau)\;d\tau$ is the time 
average of $U$ over one temporal period,  functions $V$ and $S$ stand for temporal antiderivatives $V(q,\tau):=\int^{\tau} [U(q,\theta)-\bar U(q)]\,d\theta, \;S(q,\tau):=\int^{\tau} V(q,\theta)\, d\theta$, with the
constants of integration   chosen such that $\bar V=\bar S=0$,
and the  prime denotes the derivative with respect to the new space variable $Q$.
\end{example}

\begin{example}[\textrm{\textit{(Foucault pendulum)}}]
By considering small and rapid oscillations of an ideal pendulum on a  sphere rotating with angular velocity $\Omega$, one can observe that the plane of oscillation will be rotating with the angular velocity $-\Omega\sin\lambda$, where $\lambda$ is the pendulum latitude. This is an example of the Foucault pendulum, see \cite{ArnMM} for a detailed description.
More generally, one can consider a curved surface and a pendulum slowly transported along a path on the surface.
In this case the plane of oscillation turns out to be parallel transported along the path on the surface, see the discussion 
in \cite{Klein}, and it can be regarded as a physical interpretation of the Levi-Civita connection.

One can use the averaging-reduction theory to interpret this phenomenon. First, notice that this system 
has the same $\mathbb T$-bundle structure as the one in the gyroscope Example \ref{ex:spinDisk}. Hence by shifting the momentum similarly to that in Theorem \ref{thm:cotbundle}, one obtains a Hamiltonian system on (the cotangent bundle of) the tangent bundle of the surface with a twisted symplectic structure, where the twist is given by an additional curvature term related to the Levi-Civita connection. Then the averaging-reduction procedure allows one to descend this system to (the cotangent bundle of) the unit tangent bundle of the surface, on which the parallel transport is observed. We plan to discuss this example in detail elsewhere.
\end{example}


\medskip

\section{Applications}\label{sect:applications}

\subsection{Pendulum with a vibrating suspension point}
In Section \ref{subsect:examples}, we  obtained the averaged Hamiltonian for a pendulum with a vibrating suspension point using the classical averaging method, and  observed the appearance of an additional quadratic term in the effective potential. In this section,  using the averaging-reduction procedure of Section \ref{sect:averaging}, we show that this additional term is the result of symplectic reduction.
\smallskip

We start with the following  Hamiltonian function describing a natural mechanical system with a rapidly oscillating potential:
\begin{equation}\label{eq:Ham_Pen1}
 H(x,p, t)=\frac 12 (p,\,p)+U(x)+\epsilon\,\omega^2\;\widetilde{U}(x,\omega t),
\end{equation}
where the frequency $\omega=\mu/\epsilon$ is of order $1/{\epsilon}$, and the oscillating part of the potential $\widetilde{U}(x,\phi)$ is $2\pi$-periodic and has zero mean with respect to $\phi$.

Introduce the fast time $\tau=t/\epsilon$ and fast variable $\phi=\omega t=\mu \tau$. We split any (vector-) function $f=f(t,\phi)$ depending
on two times $t$ and $\tau$ (and $2\pi$-periodic in $\phi=\mu\tau$)   into the mean and oscillatory parts:
$$
f(t,\phi)=\overline{f}(t)+\widetilde{f}(t,\phi),
$$
where $\overline{f}=({1}/{2\pi})\,\int_{0}^{2\pi}f(t,\phi)\,d\phi$.
Now regard the fast variable $\phi$ as a new coordinate. Note that when the fast time $\tau$ changes by 1,
the slow time $t$  changes only by $\epsilon$. So for a motion $x(t,\phi)=\overline{x}(t)+\widetilde{x}(t,\phi)$
described by the Hamiltonian function (\ref{eq:Ham_Pen1}), one can fix $\overline{x}(t)$ and regard
$\widetilde{x}(t,\phi)=\widetilde{x}(\overline{x},\phi)$ as a function of $\overline{x}$ and $\phi$ 
modulo $O(\epsilon)$ as $\epsilon\to 0$. 
In other words, one can consider  a map from a suspension over $M$, a manifold $M\times\mathbb T$,  to $M$ itself,
where  a point $(\overline{x},\phi)\in M\times \mathbb T$ is mapped to $x=\overline{x}(t)+\widetilde{x}(\overline{x},\phi)\in M$. Here $\widetilde{x}(\overline{x},\phi)$ can be obtained by solving the Hamiltonian system corresponding to the above Hamiltonian  function $H$ in variables $(\widetilde x, \widetilde p)$:
\begin{equation}\label{eq:Ham_fib}
 H(\widetilde{x},\widetilde{p}, t)=\frac 12 (\widetilde{p},\widetilde{p})+U(\overline x)+\epsilon\,\omega^2 \;\widetilde U(\overline{x}+\widetilde{x},\omega t),
\end{equation}
with the initial conditions such that the solution $(\widetilde x(\bar x,\phi), \widetilde p(\bar x,\phi))$ has zero mean value with respect to $\phi$.

In order to compute the metric on the suspension manifold $M\times \mathbb T$ we set  $\mu=1$ (i.e. $\omega=1/\epsilon$  and $\phi=\tau$).  Thinking of   position $\widetilde{x}$ 
and momentum  $\widetilde{p}$ as depending on the average position $\overline{x}$ and fast time $\tau$, 
we denote by  $(\widetilde{x}(\overline{x},\tau),\widetilde{p}(\overline{x},\tau))$ 
the solution of the corresponding Hamiltonian system with initial conditions corresponding to zero mean value  relative  to $\tau$.
\smallskip

According to the above consideration, the configuration space is  
a principal $\mathbb T$-bundle $\pi:M\times \mathbb T\rightarrow M$, where
the $\mathbb T$-action on $Q=M\times \mathbb T$ is the shift in fibres
$\phi\circ(\overline{x},e^{i\phi'})=(\overline{x},e^{i(\phi+\phi')}).$
Applying the averaging-reduction theory of Section \ref{sect:averaging}, we obtain the following statement.

\begin{theorem}\label{thm:Pen_thm1}
The Hamiltonian system (\ref{eq:Ham_Pen1}) averaged using the standard averaging method in Example \ref{ex:pendulum} coincides with the system obtained
on the reduced symplectic manifold $(T^*M,\omega_{can})$ with the canonical symplectic structure, the Hamiltonian function
\begin{equation}\label{eq:avePen1}
H_{slow}(\overline{x},\overline{p})=\frac 12 (\overline{p},\overline{p})+U_{\mu}(\overline{x}),
\end{equation}
for $(\overline{x},\overline{p})\in T^*M$, and with the effective potential 
$$
U_{\mu}(\overline{x})=U(\overline{x})+\frac{\epsilon^2\omega^2}{4\pi}\,\int_0^{2\pi}(\widetilde{v}(\overline{x},\tau),\widetilde{v}(\overline{x},\tau))\,d\tau\,.
$$
\end{theorem}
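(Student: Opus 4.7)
My plan is to apply Theorem \ref{thm:ave-red} to the lifted autonomous system on the principal $\mathbb{T}$-bundle $\pi: Q = M \times \mathbb{T} \to M$ obtained by promoting the fast phase $\phi = \omega t$ to an independent configuration variable on which $\mathbb{T}$ acts by translation. The effective potential correction $\frac{1}{2}\langle \mu, \mathbb{I}(\bar x)^{-1}\mu\rangle$ furnished by that theorem must then be matched to the integral in (\ref{eq:avePen1}), and the mechanical connection averaged over fibers should have vanishing curvature so that the reduced symplectic form is canonical.

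First, I would describe the metric on $Q$ that recasts the time-dependent system (\ref{eq:Ham_Pen1}) as a natural system on $T^*Q$. The decomposition $x(t, \phi) = \bar x(t) + \widetilde x(\bar x, \phi)$ used to set up the theorem gives a local diffeomorphism $\Psi: Q \to M$, $(\bar x, \phi) \mapsto \bar x + \widetilde x(\bar x, \phi)$, and the pullback of the metric on $M$ takes the local form (\ref{eq:aveMet}) with coefficients $a(\bar x, \phi)$ and $h(\bar x, \phi)$ expressed through $D_{\bar x}\widetilde x$ and $\partial_\phi \widetilde x$. The Hamiltonian (\ref{eq:Ham_Pen1}) then fits the template (\ref{eq:natHamiltonian}) with potential $U(\bar x + \widetilde x) + \epsilon\omega^{2}\widetilde U(\bar x + \widetilde x, \phi)$.

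Second, I would average the metric and potential along the fibers. Taylor-expanding $U$ around $\bar x$ and using the zero-$\phi$-mean of $\widetilde x$ gives $\overline{U}^{\mathbb{T}} = U(\bar x) + O(\epsilon^2)$, while the oscillatory $\epsilon\omega^2 \widetilde U$ has zero mean by hypothesis. For the metric, the averaged fiber coefficient equals $\frac{1}{2\pi}\int_0^{2\pi}|\partial_\phi \widetilde x|^2\,d\phi$; rescaling $\phi = \mu \tau$ turns this into the required multiple of $\int_0^{2\pi}(\widetilde v, \widetilde v)\,d\tau$ with $\widetilde v = \partial_\tau \widetilde x$, so that $\frac{1}{2}\langle \mu, \mathbb{I}(\bar x)^{-1}\mu\rangle$ reproduces the claimed $\frac{\epsilon^{2}\omega^{2}}{4\pi}\int_0^{2\pi}(\widetilde v,\widetilde v)\,d\tau$.

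The third, and principal, step is to show that the averaged cross-coupling $a_0(\bar x) = \overline{a(\bar x, \phi)}^{\mathbb{T}}$ vanishes, so that the mechanical connection is flat, $\sigma_\mu \equiv 0$, and no magnetic term modifies $\omega_{can}$. Here zero-mean of $\widetilde x$ is not by itself sufficient; I would exploit the structure of the frozen-base fast equation, which to leading order makes $\widetilde x(\bar x, \phi)$ a sum of harmonics in $\phi$ with $\bar x$-dependent amplitudes built from $\partial_{\bar x}\widetilde U$. Then each summand of the integrand $(D_{\bar x}\widetilde x\cdot v, \partial_\phi \widetilde x)$ is, after integration by parts in $\phi$, a total derivative of a $2\pi$-periodic function and averages to zero. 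This is the main obstacle: one has to verify the cancellation for all modes produced by the frozen-$\bar x$ dynamics and control the higher-order corrections to $\widetilde x$. Once this vanishing is established, Theorem \ref{thm:ave-red} directly delivers (\ref{eq:avePen1}) on $(T^*M, \omega_{can})$ with the stated $U_\mu$, and comparing with the canonical-transformation output of Example \ref{ex:pendulum} in the shifted coordinates $(\theta_1, p_1)$ identifies the two averaged systems.
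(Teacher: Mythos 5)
Your overall route is the same as the paper's: promote $\phi=\omega t$ to a fiber coordinate on the trivial bundle $Q=M\times\mathbb T\to M$, average the metric and potential over the fiber, and feed the result into Theorem~\ref{thm:ave-red}, reading off the amended potential from $\tfrac12\langle\mu,\mathbb I(\bar x)^{-1}\mu\rangle$ and the (absence of a) magnetic term from the mechanical connection. The paper's proof is in fact shorter than yours: it simply \emph{writes down} the averaged metric as $(v,v)+2\pi\gamma^2\bigl(\int_0^{2\pi}(\widetilde v,\widetilde v)\,d\tau\bigr)^{-1}$, with no cross term at all, so that $\bar{\mathcal A}=d\phi$ is flat by inspection and the rest is a one-line computation. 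Two discrepancies with your version are worth flagging. First, a normalization point: you propose to take the literal pullback metric under $(\bar x,\phi)\mapsto\bar x+\widetilde x(\bar x,\phi)$, whose fiber coefficient is $\overline{|\partial_\phi\widetilde x|^2}$; the paper instead takes the \emph{reciprocal} of the averaged fast kinetic energy as the fiber inertia, precisely so that plugging the momentum value $\mu=\epsilon\omega$ into $\tfrac12\langle\mu,\mathbb I^{-1}\mu\rangle$ returns $\frac{\epsilon^2\omega^2}{4\pi}\int(\widetilde v,\widetilde v)\,d\tau$. With your pullback normalization you must use the actual fiber momentum $h(\bar x)\,\omega$ as the value of $J$, otherwise the correction comes out as the reciprocal of the intended quantity.

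The genuine gap is in your third step. The claim that each Fourier mode of $(D_{\bar x}\widetilde x\cdot v,\,\partial_\phi\widetilde x)$ becomes a total $\phi$-derivative after integration by parts is false in general: writing $\widetilde x=\sum_k A_k(\bar x)\sin k\phi+B_k(\bar x)\cos k\phi$, the $\phi$-average of the cross term produces $\tfrac{k}{2}\bigl[(DB_k\,v)\cdot A_k-(DA_k\,v)\cdot B_k\bigr]$ from the $k$-th harmonic, which does not vanish for a general oscillating potential when $\dim M>1$. The paper's own Theorem~\ref{thm:Par_thm}, treating the essentially identical system \eqref{eq:Ham_Par}, exhibits exactly this surviving cross coupling: there the averaged connection is $d\phi+\epsilon^3\overline{S''V'}^{\flat}$ and the reduced symplectic form acquires a genuine magnetic term $-\epsilon^3\mu\,d\{\overline{S''V'}^{\flat}\}$. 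So exact flatness cannot be proved the way you propose. The statement is rescued either by restricting to a one-dimensional slow manifold (the pendulum application, where every 2-form on $M$ vanishes identically), or by observing that the magnetic contribution is one order higher in $\epsilon$ than the $O(\epsilon^2\omega^2)$ potential correction and is therefore discarded at the accuracy of Theorem~\ref{thm:Ave_natHamiltonian}; one of these two observations must replace your cancellation argument. Relatedly, in your second step the average of $\epsilon\omega^2\widetilde U(\bar x+\widetilde x(\bar x,\phi),\phi)$ over $\phi$ is \emph{not} negligible --- the zero-mean hypothesis applies at frozen $x=\bar x$, and the first correction $\epsilon\omega^2\,\overline{\partial_x\widetilde U\cdot\widetilde x}$ is of the same order as the claimed effective potential --- so the bookkeeping of which term produces $\frac{\epsilon^2\omega^2}{4\pi}\int(\widetilde v,\widetilde v)\,d\tau$ needs to be done more carefully than either your sketch or, admittedly, the paper's own proof does.
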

\begin{proof}
Now the averaged metric is given by
$$
((v,\gamma),(v,\gamma))_{(\overline{x},\phi)}=(v,v)+2\pi\,\gamma^2\left(\int_0^{2\pi}(\widetilde{v}(\overline{x},\tau),\widetilde{v}(\overline{x},\tau))\,d\tau\right)^{-1},
$$
where $(v,\gamma)\in T_{\overline{x}}M\times T_{\phi}\mathbb T$. Also, note that the value of $({1}/{2\pi})\int_0^{2\pi}(\widetilde{v}(\overline{x},\tau),\widetilde{v}(\overline{x},\tau))\,d\tau$  depends on $\overline{x}$ only.

The corresponding fiber inertia operator $\mathbb{I}(\overline{x}):\mathfrak t=\mathbb{R}\rightarrow\mathfrak t^*=\mathbb{R}$
at $\overline{x}\in M$  is given by 
$\mathbb{I}(\overline{x})\gamma
=2\pi\gamma\left(\int_0^{2\pi}(\widetilde{v}(\overline{x},\tau),\widetilde{v}(\overline{x},\tau))\,d\tau\right)^{-1}$, while
the momentum map $J:T^*(M\times \mathbb T)\rightarrow\mathfrak t^*=\mathbb{R}$ is given by
$J(\bar x,\phi,p,\eta)=\eta$.
The averaged connection $\widebar{\mathcal A}\in\Omega^1(M\times \mathbb T,\mathbb R)$ 
on the principal trivial $\mathbb T$-bundle $M\times \mathbb T$ is given by
$\widebar{\mathcal A}(\bar x,\phi,v,\gamma)=\gamma.$ This connection is flat, $\widebar{\mathcal A}=d\phi$.
\smallskip

Recall that the magnetic term is proportional to the curvature of the bundle $\pi:Q\to M$.
The flatness of $\widebar{\mathcal A}$ implies that the reduced symplectic structure on the  manifold
$T^*M=J^{-1}(\mu)/\mathbb T$ has no magnetic term, i.e. it coincides with the canonical symplectic structure $\omega_{can}$.
By taking $\mu=d\phi/d\tau=\epsilon\omega$ in
Theorem \ref{thm:ave-red}, the effective potential $U_{\mu}$ in the Hamiltonian function (\ref{eq:avePen1}) is
$$
U_{\mu}(\overline{x})=U(\overline{x})+1/2\,\langle \mu,\mathbb I(\overline{x})^{-1}\mu\rangle
=U(\overline{x})+({\mu^2}/{4\pi})\int_0^{2\pi}(\widetilde{v}(\overline{x},\tau),\widetilde{v}(\overline{x},\tau))\,d\tau
$$
$$
=U(\overline{x})+\epsilon^2({\omega^2}/{4\pi})\int_0^{2\pi}(\widetilde{v}(\overline{x},\tau),\widetilde{v}(\overline{x},\tau))\,d\tau.
$$
\end{proof}

Now return to Example \ref{ex:pendulum} in Section \ref{subsect:examples}. The Hamiltonian of a pendulum with a vibrating suspension point can be rewritten in the following form:
\begin{equation}\label{eq:Ham_Pen2}
 H(\theta,p)=\frac 12 \left(\frac pl\right)^2-(g-\epsilon\, a\, \omega^2\sin\tau)\,l\cos\theta.
\end{equation}

Recall our assumption  that the amplitude $\epsilon a$ is of order $\epsilon$ 
and the frequency $\omega=\mu/\epsilon$ is of order
$1/{\epsilon}$, which means that  the above Hamiltonian has the   form  (\ref{eq:Ham_Pen1}).
Hence we obtain the following result on its slow motion.

\begin{theorem}\label{thm:Pen_thm2}
The averaged Hamiltonian function reduces to the following Hamiltonian function describing the slow motion:
\begin{equation}\label{eq:avePen2}
H_{slow}(\overline{\theta}, \overline{p})=\frac 12 \left(\frac{\overline{p}}{l}\right)^2+U_{\mu}(\overline{\theta}),
\end{equation}
where the effective potential is $U_{\mu}(\overline{\theta})=(1/4)\, \mu^2 a^2\sin^2\overline{\theta}-gl\cos\overline{\theta}$.
\end{theorem}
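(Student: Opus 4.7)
The plan is to apply Theorem \ref{thm:Pen_thm1} directly to the vibrating pendulum Hamiltonian, identifying the slow potential, the oscillatory potential, and the fast motion in the fiber, then substituting into the general formula for the effective potential.

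First I would match the pendulum Hamiltonian (\ref{eq:Ham_Pen2}) with the general form (\ref{eq:Ham_Pen1}). Rewriting gives $U(\theta) = -gl\cos\theta$ (slow mean potential), while the fast-oscillating part $\epsilon\omega^{2}\widetilde U(\theta,\tau) = \epsilon a\omega^{2}\,l\sin\tau\,\cos\theta$ identifies $\widetilde U(\theta,\tau) = a l\sin\tau\cos\theta$, with zero $\tau$-mean as required. The base manifold is $M = S^{1}$ with coordinate $\theta$ and metric $(v,v) = l^{2}v^{2}$, so that the dual norm on $T^{*}M$ is $(p,p) = p^{2}/l^{2}$, matching the kinetic term in (\ref{eq:Ham_Pen2}).

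Next I would solve the fast equation for the fiber motion. With $\mu=1$ and $\omega = 1/\epsilon$ as in the proof of Theorem \ref{thm:Pen_thm1}, the Hamiltonian system (\ref{eq:Ham_fib}) in fast time $\tau$ gives, to leading order,
\begin{equation*}
\partial_{\tau}\widetilde p \;\approx\; a l\,\sin\tau\,\sin\bar\theta,
\end{equation*}
so the zero-mean solution is $\widetilde p(\bar\theta,\tau) = -a l\cos\tau\,\sin\bar\theta$ and the corresponding tangent velocity is $\widetilde v(\bar\theta,\tau) = \widetilde p/l^{2} = -(a/l)\cos\tau\,\sin\bar\theta$. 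Computing the fiber integral with the metric on $M$,
\begin{equation*}
\int_{0}^{2\pi}(\widetilde v,\widetilde v)\,d\tau \;=\; \int_{0}^{2\pi} l^{2}\widetilde v^{2}\,d\tau \;=\; a^{2}\sin^{2}\bar\theta\int_{0}^{2\pi}\cos^{2}\tau\,d\tau \;=\; \pi\,a^{2}\sin^{2}\bar\theta.
\end{equation*}

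Substituting into the effective potential formula of Theorem \ref{thm:Pen_thm1} and using $\mu = \epsilon\omega$ gives
\begin{equation*}
U_{\mu}(\bar\theta) \;=\; -gl\cos\bar\theta \;+\; \frac{\epsilon^{2}\omega^{2}}{4\pi}\cdot \pi a^{2}\sin^{2}\bar\theta \;=\; -gl\cos\bar\theta + \tfrac{1}{4}\mu^{2}a^{2}\sin^{2}\bar\theta,
\end{equation*}
as claimed. Since the averaged connection on the trivial $\mathbb T$-bundle $M\times\mathbb T\to M$ is flat (again by Theorem \ref{thm:Pen_thm1}), the reduced symplectic form carries no magnetic term, so the slow Hamiltonian on $(T^{*}M,\omega_{\text{can}})$ takes the natural form (\ref{eq:avePen2}) with the kinetic term $(1/2)(\bar p/l)^{2}$ and potential $U_{\mu}(\bar\theta)$. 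The main bookkeeping obstacle is tracking the $\epsilon$, $\omega$, and $\mu$ factors consistently between the slow and fast times and confirming that the forcing term gives precisely the stated fast velocity $\widetilde v$; once this is handled, the theorem follows as a direct application of Theorem \ref{thm:Pen_thm1}.
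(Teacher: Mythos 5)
Your proposal is correct and follows essentially the same route as the paper's proof: solve the fast fiber dynamics to leading order with the zero-mean normalization, compute the averaged kinetic energy of the oscillation, and substitute into the effective-potential formula of Theorem \ref{thm:Pen_thm1} (the paper phrases the fast solution in terms of $d\widetilde\theta/d\tau$ with prefactor $\omega^2l^2/4\pi$ rather than your $\widetilde v=\widetilde p/l^2$ with prefactor $\epsilon^2\omega^2/4\pi$, but these agree since $d\widetilde\theta/d\tau=\epsilon\,d\widetilde\theta/dt$). The overall sign of your $\widetilde p$ differs from the paper's $d\widetilde\theta/d\tau$, which is immaterial since only its square enters.
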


\begin{proof}
Let $\tau=t/\epsilon$ and $(\widetilde{\theta}(\overline{\theta},\tau), \widetilde{p}(\overline{\theta},\tau))$ be the solution of Hamiltonian system corresponding to the function
$$
 \widetilde{H}(\widetilde{\theta}, \widetilde{p})=\frac 12 \left(\frac {\widetilde{p}}{l}\right)^2-(g- \frac a {\epsilon} \sin\tau)\,l\cos(\overline{\theta}+\widetilde{\theta}),
$$
with the initial values determined by the zero mean conditions of $(\widetilde{\theta}(\overline{\theta},\tau), \widetilde{p}(\overline{\theta},\tau))$ with respect to $\tau$.

Omitting  terms of order $\epsilon^2$ we obtain the following Newton equation for $\widetilde{\theta}$:
$$
 d^2\widetilde{\theta}/d t^2=-(a/(\epsilon l))\,{\sin\tau\sin\overline{\theta}}\,.
$$
By rewriting it for the fast time $\tau=t/\epsilon$ and integrating we obtain $d\widetilde{\theta}/d\tau=(\epsilon a/l)\,{\cos\tau\sin\overline{\theta}}$ and $\widetilde{\theta}=(\epsilon a/l)\,{\sin\tau\sin\overline{\theta}}$.
(In this integration one regards the right-hand side as a function of $\tau$ modulo higher order terms in $\epsilon$,
and uses the zero mean condition on  $d\widetilde{\theta}/d\tau$ and $\widetilde{\theta}$.)

Furthermore, the momentum map
$J:T^*(\mathbb T_{\bar\theta}\times \mathbb T_\phi)\rightarrow \mathbb{R}$, corresponding to the $\mathbb T$-action on $\mathbb T_{\bar\theta}\times \mathbb T_\phi$ equipped with the averaged metric is $J(\bar\theta,\phi,p,\mu)=\mu$, and $({l^2}/{2\pi})\int_0^{2\pi}(d\widetilde{\theta}/d\tau,d\widetilde{\theta}/d\tau)\,d\tau=\epsilon^2(a^2/2) \sin^2\overline{\theta}$.

 Therefore, by Theorem \ref{thm:Pen_thm1}, the reduced (or slow) symplectic manifold is $(T^*\mathbb T_{\bar\theta},\omega)$ with symplectic structure  $\omega=d\bar\theta\wedge d\bar p$  and
 the Hamiltonian of the slow motion is
$$
H_{slow}(\overline{\theta}, \overline{p})=\frac 12 \left(\frac{\overline{p}}{l}\right)^2+U_{\mu}(\overline{\theta}),
$$
where  the effective potential is
$$
U_{\mu}(\overline{\theta})=\frac{\omega^2 l^2}{4\pi}\int_0^{2\pi}(d\widetilde{\theta}/d\tau,d\widetilde{\theta}/d\tau)\,d\tau+U(\bar\theta)
$$
$$
=\epsilon^2\frac{\omega^2 a^2}{4}\sin^2\overline{\theta}-gl\cos\overline{\theta}=\frac 14\, \mu^2 a^2\sin^2\overline{\theta}-gl\cos\overline{\theta}\,.
$$
\end{proof}


\subsection{Craik-Leibovich equation}\label{sect:CL}
Consider the motion of an ideal fluid confined to a three-dimensional domain $D$ 
with  fast oscillating boundary  $\partial D$. The   averaged fluid motion is described by the following Craik-Leibovich (CL) equation on the fluid velocity field $v$:
\begin{equation}\label{eq:CraLei}
\left\{
\begin{array}{ll}
\frac{\partial v}{\partial t}+(v,\nabla)v+\mathrm{curl}\;v\times V_s=-\nabla p,\\
(v+V_s)\,||\,\partial D,\\
\mathrm{div}\,v=0,
\end{array}
\right.
\end{equation}
where $V_s$ is a (time-dependent) prescribed Stokes drift velocity related to the average of surface waves.
We refer to \cite{vla} for a derivation of the CL equations via perturbation theory.  
In \cite{yang1, yang2} the Hamiltonian structure of the CL equation was studied, along with 
a generalization of the perturbation theory to a principal $\mathbb T$-bundle over any group $G$ and derivation of the Euler equation associated with a certain central extension of  $G$.

In a more general setting, 
let $\mathrm{SDiff}(D)$ be the group of all volume-preserving diffeomorphisms of an $n$-dimensional Riemannian manifold $D$ with boundary $\partial D$. Its Lie algebra $\mathfrak g=\text{SVect}(D)$ consists of
all the divergence-free vector fields in $D$ tangent to the boundary $\partial D$, while the regular dual space
$\mathfrak g^*=\Omega^1(D)/d\Omega^0(D)$ of this Lie algebra is the space of cosets of 1-forms on $D$ modulo  exact 1-forms.

\begin{theorem}{\rm (see \cite{yang2})}\label{thm:genCL}
The  $n$-dimensional  Craik-Leibovich (CL) equation on the  space
$\Omega^1(D)/d\Omega^0(D)$ has the form
\begin{equation}\label{eq:genCL}
 \frac{d}{dt}\;[u]=-\mathcal{L}_{v+V_s}\;[u],
\end{equation}
where $v+V_s\in {\rm SVect}(D)$, and $[u]=[v^\flat]\in\Omega^1(D)/d\Omega^0(D)$ is the coset of the 1-form
$v^\flat$ metric-related to the vector field $v$ on $D$.
\end{theorem}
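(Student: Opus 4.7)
The plan is to recast the Craik--Leibovich system \eqref{eq:CraLei} entirely in the language of 1-forms modulo exact 1-forms, and to recognize its two convective terms as the Lie derivative $\mathcal{L}_{v+V_s}$ applied to the coset $[v^\flat]\in\Omega^1(D)/d\Omega^0(D)$. Concretely, I would start by taking the metric dual of the first equation of \eqref{eq:CraLei}, which gives
$$\partial_t v^\flat + ((v,\nabla)v)^\flat + (\mathrm{curl}\,v\times V_s)^\flat = -dp.$$
Since $-dp\in d\Omega^0(D)$, projecting to the quotient kills the pressure contribution, leaving only the time derivative of $[v^\flat]$ and the two convective 1-forms to be identified.

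The key tool for the identification is Cartan's magic formula $\mathcal{L}_u\alpha = i_u\,d\alpha + d(i_u\alpha)$, which in the quotient reads $[\mathcal{L}_u v^\flat] = [i_u\,dv^\flat]$ for any vector field $u$. A direct computation in local coordinates (or an invocation of the Koszul formula for the Levi--Civita connection) then gives, in arbitrary dimension,
$$\mathcal{L}_v v^\flat = ((v,\nabla)v)^\flat + d\tfrac12|v|^2, \qquad \mathcal{L}_{V_s} v^\flat = (\mathrm{curl}\,v\times V_s)^\flat + d\langle v,V_s\rangle,$$
where in general dimension the term $(\mathrm{curl}\,v\times V_s)^\flat$ is to be read as the 1-form $i_{V_s}dv^\flat$ obtained by contraction with the 2-form $dv^\flat$. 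By linearity of $\mathcal{L}$ in the vector field, combining the two identities yields $[\mathcal{L}_{v+V_s}v^\flat] = [((v,\nabla)v)^\flat + (\mathrm{curl}\,v\times V_s)^\flat]$ in $\Omega^1(D)/d\Omega^0(D)$, which together with the previous step produces the desired form $\frac{d}{dt}[u] = -\mathcal{L}_{v+V_s}[u]$.

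The main subtlety, and the step on which I would spend the most care, is the verification that $\mathcal{L}_{v+V_s}$ is a well-defined operator on the quotient $\Omega^1(D)/d\Omega^0(D)$; equivalently, that $v+V_s$ actually belongs to $\mathrm{SVect}(D)$. Tangency of $v+V_s$ to $\partial D$ is exactly the boundary condition $(v+V_s)\parallel\partial D$ in \eqref{eq:CraLei}, while divergence-freeness combines the incompressibility constraint $\mathrm{div}\,v=0$ with the standing assumption that the prescribed Stokes drift $V_s$ is itself divergence-free. Under these hypotheses the Lie derivative descends to the coadjoint-type action on $\mathfrak g^*=\Omega^1(D)/d\Omega^0(D)$ used in Arnold's framework, so the identity derived above is the intrinsic Euler-type equation \eqref{eq:genCL} on cosets, completing the proof.
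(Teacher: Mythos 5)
Your argument is correct, and it is worth noting that the paper offers no proof of this statement at all: Theorem \ref{thm:genCL} is imported from \cite{yang2} with only a citation, so your derivation supplies the computational content the paper outsources. Your route is the direct one, from the PDE to the coset equation: dualize \eqref{eq:CraLei}, observe that $-dp$ dies in $\Omega^1(D)/d\Omega^0(D)$, and absorb the two convective terms via the identities $\mathcal{L}_v v^\flat=((v,\nabla)v)^\flat+\tfrac12 d|v|^2$ and $\mathcal{L}_{V_s}v^\flat=i_{V_s}dv^\flat+d\langle v,V_s\rangle=(\mathrm{curl}\,v\times V_s)^\flat+d\langle v,V_s\rangle$, the last equality using $dv^\flat=i_{\mathrm{curl}\,v}\mu$ in dimension three; all exact remainders vanish in the coset, and the hypothesis $v+V_s\in{\rm SVect}(D)$ (tangency from the boundary condition together with divergence-freeness of $v$ and $V_s$) is exactly what makes $\mathcal{L}_{v+V_s}$ the coadjoint-type action on $\mathfrak g^*$, as you correctly emphasize. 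The route implicit in the paper runs in the opposite direction: one starts from the abstract Euler equation \eqref{eq:cenExtEuler} on the dual of the central extension of ${\rm SVect}(D)$ by the shift cocycle $\sigma_{V_s}$, substitutes ${\rm ad}^*_X=\mathcal{L}_X$, and lands on \eqref{eq:genCL2} and hence \eqref{eq:genCL}; that approach buys the Hamiltonian and central-extension structure for free, but it presupposes that \eqref{eq:genCL} is the correct intrinsic form of \eqref{eq:CraLei}, which is precisely the equivalence your computation verifies. The only point I would make more explicit is the passage from $n=3$ to general $n$: since $\mathrm{curl}$ has no $n$-dimensional meaning, the theorem must be read, as you indicate, with $i_{V_s}dv^\flat$ taken as the definition of the cross term, so that for general $n$ your argument is really a consistency check that this definition reduces to \eqref{eq:CraLei} when $n=3$ rather than a derivation from a pre-existing $n$-dimensional PDE.
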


\begin{remark}
The requirement $v+V_s\in {\rm SVect}(D)$ provides the boundary condition of tangency of   the vector field $v+V_s$  to the boundary $\partial D$. Although the divergence-free vector fields $v$ and $V_s$ are not necessarily tangent to the boundary, the 1-forms $v^\flat$ and $V_s^\flat$, considered up to the differential of a function, are well-defined elements in the dual space $\Omega^1(D)/d\Omega^0(D)$ of Lie algebra ${\rm SVect}(D)$.
\end{remark}

Upon shifting the origin in  $\mathfrak g^*=\Omega^1(D)/d\Omega^0(D)$ by $[V_s^\flat]$,
equation (\ref{eq:genCL}) becomes
\begin{equation}\label{eq:genCL2}
\frac{d}{dt}\;[w]=-\mathcal{L}_{\mathbb{I}^{-1}[w]}\;\left([w]-[V_s^\flat]\right)
\end{equation}
for $[w]:=[u+V_s^\flat]$.

\begin{theorem}{\rm (see \cite{yang2})}\label{thm:cenExt}
The equation (\ref{eq:genCL2}) is the Euler equation on the central extension $\widehat{\mathfrak g}$ of the Lie algebra $\mathfrak{g}={\rm SVect}(D)$ by means of the 2-cocycle
$$
{\sigma}_{V_s}(X,Y):=-\left\langle\mathcal{L}_{X}\;V_s^\flat,Y\right\rangle
$$
associated with the vector field $V_s$.
\end{theorem}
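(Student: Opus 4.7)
The plan is to exhibit (\ref{eq:genCL2}) as the Euler-Arnold equation on $\widehat{\mathfrak{g}}^*$ for the quadratic kinetic-energy Hamiltonian $H(\nu, c) = \tfrac{1}{2}\langle\nu, \mathbb{I}^{-1}\nu\rangle$ coming from the $L^2$-metric on $\text{SVect}(D)$, with the central charge fixed to $c = 1$. Writing down the Hamiltonian equations associated with the Lie-Poisson bracket (\ref{bracket}) of Theorem~\ref{thm:poisson} (in the right-action sign convention, cf.\ the comment after that theorem, since ideal fluid dynamics is right-invariant), for any such $H$ on $\widehat{\mathfrak{g}}^* = \mathfrak{g}^* \oplus \mathbb{R}$ one gets
\begin{equation*}
\dot\nu = -\operatorname{ad}^*_{\delta H/\delta\nu}\nu - c\,\sigma^{\sharp}_{\delta H/\delta\nu}, \qquad \dot c = 0,
\end{equation*}
where $\sigma^{\sharp}_X \in \mathfrak{g}^*$ is defined by $\langle\sigma^{\sharp}_X, Y\rangle = \sigma_{V_s}(X, Y)$.

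Next, I would specialize to $\mathfrak{g} = \text{SVect}(D)$ with regular dual $\mathfrak{g}^* = \Omega^1(D)/d\Omega^0(D)$ and invoke the standard Arnold identification $\operatorname{ad}^*_X \nu = \mathcal{L}_X \nu$ for the ideal-fluid algebra. The definition of $\sigma_{V_s}$ gives $\sigma^{\sharp}_X = -\mathcal{L}_X V_s^\flat$ directly. With $\delta H/\delta\nu = \mathbb{I}^{-1}\nu \in \mathfrak{g}$ and $c = 1$, the Euler equation collapses to
\begin{equation*}
\dot\nu = -\mathcal{L}_{\mathbb{I}^{-1}\nu}\nu + \mathcal{L}_{\mathbb{I}^{-1}\nu} V_s^\flat = -\mathcal{L}_{\mathbb{I}^{-1}\nu}(\nu - V_s^\flat),
\end{equation*}
which coincides with (\ref{eq:genCL2}) after renaming $\nu = [w]$.

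The conceptual reason the calculation works so cleanly is the integration-by-parts identity $-\langle \mathcal{L}_X V_s^\flat, Y\rangle = \langle V_s^\flat, [X, Y]\rangle$, valid because every $X \in \text{SVect}(D)$ is divergence-free and tangent to $\partial D$. This displays $\sigma_{V_s}$ as the Lie-algebra coboundary of the functional $V_s^\flat \in \mathfrak{g}^*$, which automatically supplies antisymmetry and the 2-cocycle condition (so that $\widehat{\mathfrak{g}}$ is a bona fide Lie algebra), and, in the language of the remark following Theorem~\ref{thm:group-ext}, identifies the Poisson structure on $\widehat{\mathfrak{g}}^*$ at $c = 1$ with the linear Lie-Poisson structure on $\mathfrak{g}^*$ shifted to the base point $V_s^\flat$ --- precisely the Stokes-drift shift visible in (\ref{eq:genCL2}).

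The main potential obstacle is purely bookkeeping: ensuring that all manipulations descend to the quotient $\Omega^1(D)/d\Omega^0(D)$ so that both $\operatorname{ad}^*_X$ and $\sigma^{\sharp}_X$ are well-defined there, and that the left/right sign convention chosen for the bracket of Theorem~\ref{thm:poisson} is the one matching the right-invariance of the fluid metric. Both difficulties are settled by the single integration-by-parts identity above, which is why no direct verification of the Jacobi identity for the extended bracket is actually required.
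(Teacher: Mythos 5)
Your proposal is correct and follows essentially the same route as the paper: the paper's own justification (the remark following the theorem) likewise identifies $\sigma_{V_s}$ as the coboundary of $\mathbb I(V_s)=V_s^\flat\in\mathfrak g^*$, writes the general Euler equation $\dot\xi=-\mathrm{ad}^*_{\mathbb I^{-1}\xi}(\xi-\mathbb I(V_s))$ on the centrally extended algebra, and specializes via $\mathrm{ad}^*_X\xi=\mathcal L_X\xi$ on $\mathrm{SVect}(D)$. The only point to watch is the sign in your integration-by-parts identity, which depends on whether the bracket on $\mathrm{SVect}(D)$ is taken as the vector-field commutator or its negative (the right-invariant convention); this is the same convention issue the paper absorbs into its choice of sign for the bracket after Theorem~\ref{thm:poisson}.
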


\begin{remark}
More generally, for a  ``shift 2-cocycle'' 
$$
{\sigma}_{V_s}(X,Y):=-\left\langle{\rm ad}_X^*\mathbb I(V_s),Y\right\rangle=-\left\langle\mathbb I(V_s),[X,Y]\right\rangle
$$ 
on an arbitrary Lie algebra $\mathfrak g$,
consider the corresponding (possibly trivial) central extension $\widehat{\mathfrak g}$ 
of Lie algebra $\mathfrak g$ by means of this 2-cocycle.  
Then the Euler equation on the centrally extended Lie algebra $\widehat{\mathfrak g}$ is
\begin{equation}\label{eq:cenExtEuler}
\frac{d}{dt}\;\xi=-{\rm ad}^*_{\mathbb{I}^{-1}\xi}\left(\xi-\mathbb I(V_s)\right),
\end{equation}
where $\xi\in \mathfrak g^*$.
Applying this to the Lie algebra ${\rm SVect}(D)$, where ${\rm ad}_X^*\xi=\mathcal{L}_X \xi$, we obtain the above theorem.
\end{remark}

The averaging-reduction procedure gives us an explanation of the structure of central extension for 
the CL equation, which  appeared as the result of averaging, as well as the origin of the vector field $V_s$. More specifically, by applying the symplectic reduction procedure to a natural system on the principal 
$\mathbb T$-bundle $G\times \mathbb T\to G$ for a group $G$ (in the case of  an oscillating flow, $G$ is the group
$\text{SDiff}(D)$ of volume-preserving diffeomorphisms of $D$) 
one obtains the averaged system on the cotangent bundle $T^*G$ of the base $G$ with  
the symplectic structure endowed with a magnetic term.
Furthermore,  Theorem \ref{thm:poisson} provides the corresponding central extension.
First we prove the averaging-reduction theorem for a trivial bundle $G\times \mathbb T\to G$, and then explain the necessary changes in the general case. 

\begin{theorem}\label{thm:aveCL}
The averaging of a natural $G$-invariant Hamiltonian system on $T^*(G\times \mathbb T)$ reduces to a slow Hamiltonian system on the reduced symplectic manifold $(T^*G,\omega_\mu)$ with the Hamiltonian function $H_{slow}(\bar g,\bar r)=\frac 12 (\bar r,\,\bar r),$
where $(\bar g,\bar r)\in T^*G$, and the symplectic structure $\omega_\mu$  given by
$$
\omega_\mu=\omega_{can}-\mu\,\pi_G^*\,d\widetilde{\alpha},
$$
for the canonical symplectic form   $\omega_{can}$ on $T^*G$. Here $\pi_G:T^*G\rightarrow G$ is 
the cotangent bundle projection and $\widetilde{\alpha}$ is a certain 1-form on the base $G$ depending on the averaged metric.
\end{theorem}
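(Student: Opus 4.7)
The plan is to specialize Theorem \ref{thm:ave-red} to the trivial principal $\mathbb T$-bundle $Q = G \times \mathbb T \to G$, exploiting the $G$-invariance of all data. First I would apply $\mathbb T$-fibrewise averaging to the natural Hamiltonian; by Remark \ref{rem:metric}, the resulting averaged metric on $G \times \mathbb T$ takes the local form \eqref{eq:aveMet}, namely $((u,\gamma),(u,\gamma))_{(x,\tau)} = (u,u)_x + 2\gamma\, h(x)\langle \widetilde{\alpha}(x),u\rangle + h(x)\gamma^2$ for some $\widetilde{\alpha} \in \Omega^1(G)$ and $h \in C^\infty(G)$. Because the left $G$-action on the first factor commutes with the right $\mathbb T$-action and preserves the original metric, it preserves the averaged one as well, forcing $\widetilde{\alpha}$ to be left-invariant and $h$ to be constant. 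By Proposition \ref{prop:momentum} the averaged mechanical connection is then $\bar{\mathcal A}(x,\tau) = \widetilde{\alpha}(x) + d\tau$.

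Next I would compute the magnetic 2-form using the curvature identity $\pi^*\sigma_\mu = \mathbf{d}\langle\mu,\bar{\mathcal A}\rangle$ of Theorem \ref{thm:cotbundle}. Since $d(d\tau) = 0$, this gives $\pi^*\sigma_\mu = \mu\,\mathbf{d}\widetilde{\alpha}$, so $\sigma_\mu = \mu\,d\widetilde{\alpha}$ on $G$ and the magnetic form on $T^*G$ becomes $\beta_\mu = \pi_G^*\sigma_\mu = \mu\,\pi_G^*\,d\widetilde{\alpha}$. Theorem \ref{thm:ave-red} then yields precisely the stated reduced symplectic structure $\omega_\mu = \omega_{can} - \mu\,\pi_G^*\,d\widetilde{\alpha}$ on $T^*G$.

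For the Hamiltonian, the same theorem produces the natural system $\bar H_\mu(\bar g,\bar r) = \frac12(\bar r,\bar r)_G + \overline{U(\bar g)}^{\mathbb T} + \frac12\langle\mu,\mathbb I(\bar g)^{-1}\mu\rangle$. Both $\overline{U}^{\mathbb T}$ and the fibre inertia operator $\mathbb I$ are $G$-invariant, being the $\mathbb T$-averages of $G$-invariant objects, and hence constant along $G$. Discarding the resulting additive constant gives the advertised slow Hamiltonian $H_{slow}(\bar g,\bar r) = \frac12(\bar r,\bar r)$. The only real subtlety I anticipate is confirming that $\mathbb T$-averaging preserves $G$-invariance (immediate from commutativity of the two actions) and that $\bar{\mathcal A}$ therefore inherits left-invariance, which is what makes $\widetilde{\alpha}$ a well-defined 1-form on the base $G$; once this is in place everything else reduces to a direct application of the general averaging-reduction framework.
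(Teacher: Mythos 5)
Your proposal is correct and follows essentially the same route as the paper: average the metric over the $\mathbb T$-fibres, use $G$-invariance (which survives averaging since the two actions commute) to make the fibre inertia operator and the averaged potential constant along $G$, read off the mechanical connection $\bar{\mathcal A}=\widetilde{\alpha}+d\tau$ via Proposition \ref{prop:momentum}, and apply Theorem \ref{thm:ave-red} to get $\omega_\mu=\omega_{can}-\mu\,\pi_G^*d\widetilde{\alpha}$ with the purely kinetic slow Hamiltonian. The only differences are cosmetic: the paper works with the right $G$-action (so $\widetilde{\alpha}$ comes out right-invariant rather than left-invariant) and, instead of invoking the general form of a $\mathbb T$-invariant metric abstractly, it derives the averaged metric coefficients explicitly from a zero-mean periodic solution of the fast subsystem normalized to averaged kinetic energy $1/\epsilon^2$ --- a concretization needed later to identify $\widetilde{\alpha}$ with the Stokes drift $V_s^\flat$, but not needed for the statement as formulated.
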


\begin{proof}
For a point $q=(g,x)\in G\times \mathbb T$ let $p=(\nu, y)$ be  a covector at that point. 
A natural $G$-invariant  Hamiltonian system on $T^*(G\times \mathbb T)$ with Hamiltonian 
$H(q,p)=\frac 12 (p,\,p)+U(x)$ for a $G$-invariant metric on $G\times\mathbb T$  has the form:
$$
 H(g,x;\nu,y)=\frac 12 ((\nu,y),\;(\nu,y))_{(g, x)}+U(x).
$$
To write it more explicitly,\footnote{This is based on the same consideration as Remark \ref{rem:metric}.} 
let $\mathcal B^*(x): \mathfrak g^*\to T_x^*\mathbb T$ be the linear map associated with the $G$-invariant 
metric,\footnote{As a matter of fact, $\mathcal B^*$ will turn out to be dual of a  flat connection $\mathcal B$  in the bundle 
$G\times \mathbb T\to \mathbb T$, see Remark~\ref{rem:nontriv-bundle}.} and identify covector components $\nu$ with elements of $\mathfrak g^*$  
by right translations. Then the Hamiltonian  can be rewritten as
\begin{equation}\label{eq:Ham_CL2}
 H(g,x;\nu,y)=\frac 12 (y,\;y)_x+(y,\;\mathcal B^*\nu)_x+\frac 12 (\nu,\;\nu)_x+U(x).
\end{equation}

Now  fix a position $g$ of the slow motion and  consider the fast motion (i.e. $x$-dependence)
of the system. In this setting
we have $\nu=0$, since the fast motion has zero mean, omitting  higher order terms in $\epsilon$. Then the Hamiltonian for the fast motion becomes
\begin{equation}\label{eq:Ham_CL2_fast}
 \widetilde{H}(\widetilde x,\widetilde y)=\frac 12 (\widetilde y,\widetilde y)_{\widetilde x}+U(\widetilde x)\,.
\end{equation}
Note that the  Hamiltonian of fast motion does not depend on the group element $g$.

Suppose that the period of this motion is $2\pi/\omega$ and set the fast variable to be $\phi=\omega t$, where the frequency $\omega=\mu/\epsilon$.
Let $(\widetilde{x}(\phi),\widetilde{y}(\phi))$ be any Hamiltonian trajectory for  Hamiltonian function (\ref{eq:Ham_CL2_fast}) satisfying the following two conditions:  it is $2\pi$-periodic with respect to the fast variable $\phi$ and its initial condition is chosen in  a way to provide zero mean for $(\widetilde{x}(\phi),\widetilde{y}(\phi))$. 
Such a vanishing condition generically defines a 1-parameter family of solutions parametrized by the energy level
(e.g. such solutions differ by scaling for a quadratic potential $U$). 
We use the fast  variable $\phi$ to define the 
$\mathbb T$-action on $G\times \mathbb T$,  which is given by $\phi\circ(g,e^{i\phi'})=(g,e^{i(\phi+\phi')})$. 

Since  the  Hamiltonian of fast motion does not depend on the group element $g$,  the averaged kinetic energy $(1/2\pi)\int_0^{2\pi}(\widetilde{v}(\phi),\widetilde{v}(\phi))\,d\phi$
does not depend on $g$ as well, where $\widetilde v $ is related to $\widetilde y$ by means of the metric: $\widetilde v^\flat=\widetilde y$. Take the solution $(\widetilde{x}(\phi),\widetilde{y}(\phi))$ for which 
 the averaged kinetic energy $(1/2\pi)\int_0^{2\pi}(\widetilde{v}(\phi),\widetilde{v}(\phi))\,d\phi$ is 
 $ {1}/{\epsilon^2}$.

 Then the averaged metric $\overline{(\cdot\,,\cdot)}^{\mathbb T}$ on $G\times\mathbb T$ 
 (cf. Remark \ref{rem:metric}) is given by
\begin{equation}
\begin{array}{rcl}
\overline{((v,\gamma),(v,\gamma))}^{\mathbb T}&=&(v,v)+2\pi(2 \gamma\,\langle \widetilde{\alpha},v\rangle+\gamma^2)
\left(\int_0^{2\pi}(\widetilde{v}(\phi),\widetilde{v}(\phi))\,d\phi \right)^{-1}\\
&=&(v,v)+\epsilon^2 \,(2\gamma\,\langle \widetilde{\alpha},v\rangle+ \gamma^2)\,,
\end{array}
\end{equation}
where $(v,\gamma)\in T_{g}G\times T_{\phi}\mathbb T\simeq T_{g}G\times \mathbb R$, while the
1-form $\widetilde{\alpha}\in T^*G$ depends on  $\widetilde y$. This 1-form is right-invariant,  i.e. it satisfies
$\widetilde{\alpha}(hg)=R_{g^{-1}}^*\widetilde{\alpha}(h)$. Indeed,
the metric on $G\times \mathbb T$ is right-invariant under the $G$-action, hence its average with respect to
the $\mathbb T$-action is right-invariant under the $G$-action as well.

The averaged inertia operator $\mathbb{I}(g):\mathfrak t=\mathbb{R}\rightarrow\mathfrak t^*=\mathbb{R}$ 
is $\mathbb{I}(g)\gamma=\epsilon^2\gamma$, while the
averaged momentum map $ J:T(G\times \mathbb T)\rightarrow\mathfrak t^*=\mathbb{R}$ is given by
 $ J(g,\phi, v,\gamma)=\langle \epsilon^2 \widetilde{\alpha},v\rangle+\epsilon^2\gamma$, 
 cf. Proposition \ref{prop:momentum}.
Finally, the averaged connection $\bar{\mathcal A}\in\Omega^1(G\times \mathbb T,\mathbb R)$ on the principal
$\mathbb T$-bundle $G\times \mathbb T$ is given by $\bar{\mathcal A}(g,\phi)= \widetilde{\alpha}+d\phi.$

\smallskip

According to Theorem \ref{thm:ave-red}  the magnetic term in the symplectic structure is $\beta_\mu=\mu\,\pi_G^*d \bar{\mathcal A}=\mu\,\pi_G^*d \widetilde{\alpha}$ for the corresponding 
$ \mu\in\mathbb R$. Therefore, the  symplectic structure  on the reduced manifold $T^*G\cong  J^{-1}(\mu)/\mathbb T$ 
is the 2-form $\omega_\mu=\omega_{can}-\mu\,\pi_G^*d\widetilde{\alpha}$. The correction in the effective potential $U_{\mu}$ of   Hamiltonian function (\ref{aveHam}) is  constant,
and hence   can be omitted.
\end{proof}

\begin{remark}\label{rem:nontriv-bundle}
For the general case of a topologically nontrivial $G$-bundle $\pi:\,M\to N$, consider an open subset 
$\mathcal O\subset N$. Then  locally in the base  trivialize $M|_{\pi^{-1}(\mathcal O)}\cong \mathcal O\times G$ and the cotangent bundle $T^*M|_{\pi^{-1}(\mathcal O)}\cong T^*\mathcal O\times G \times \mathfrak g^*$. This  decomposition of the tangent/cotangent spaces gives us a flat connection $\mathcal B\in\Omega^1(\mathcal O,\mathfrak g)$ on $ \mathcal O\times G$. Therefore, for the local representation $(x,g; y,\nu)$ of $(q,p)$, we have
$q=(x,g)$ and $p=(y+\mathcal B^*\nu,\nu)$, and then the Hamiltonian $ H(q,p)=\frac 12 (p,\,p)+U(\pi(q))$
in $T^*M$  becomes
$$
\begin{array}{rcl}
H(x,g;y,\nu)&=&\frac 12 (y+\mathcal B^*\nu,y+\mathcal B^*\nu)_x+\frac 12 (\nu,\nu)_x+U(x)\\
                    &=&\frac 12 (y,y)_x+(\mathcal B^*\nu,y)_x+\frac 12 (\mathcal B^*\nu,\mathcal B^*\nu)_x+\frac 12 (\nu,\nu)_x+U(x).
\end{array}
$$
By combining the third and forth terms in the above expression of the general Hamiltonian and by switching the order of coordinates, we obtain the Hamiltonian (\ref{eq:Ham_CL2}) in the proof. Finally, by confining ourselves to $x$-periodic solutions in the base lying inside 
$ \mathcal O$ one can reduce the setting to a Hamiltonian on $T^*(G\times \mathbb T)$.
\end{remark}

\begin{remark}
For the oscillating flow we
consider the   principal $G$-bundle $\pi:M\rightarrow N$ with $G=\text{SDiff}(D)$ to be the group of
volume-preserving diffeomorphisms. The infinite-dimensional manifold $M$ is the space of all
volume-preserving embeddings of the reference manifold $D$ to $\mathbb R^n$, while $N$ is the manifold of all  boundaries of such embeddings, i.e. hypersurfaces  diffeomorphic to $\partial D$ and bounding diffeomorphic manifolds with the same volumes.

For the averaged metric, the 1-form $\widetilde{\alpha}$ on $G$ is metric-related to a vector field $V_s$ on $G$  via $\widetilde{\alpha}=V_s^\flat$. This defines the 2-cocycle
${\sigma}_{V_s}(X,Y):=-\left\langle\mathcal{L}_{X}\;V_s^\flat,Y\right\rangle$ on the Lie algebra ${\rm SVect}(D)$ in Theorem \ref{thm:cenExt}.
\end{remark}

\smallskip

\subsection{Particles in  rapidly oscillating potentials}

It turns out that the motion of a particle in a rapidly oscillating potential field  \cite{cole} can also be viewed in the
context of the symplectic averaging-reduction setting. Moreover, in this example one observes
both phenomena: additional terms in the effective potential and magnetic correction to the symplectic structure.

Namely, consider the Hamiltonian function
\begin{equation}\label{eq:Ham_Par}
H(x,p, \omega t)=\frac 12 \,p\cdot p+\epsilon^2\mu^2\;U(x, \omega t),
\end{equation}
where $x\in \mathbb R^n$, the potential function $U(x,\phi)$ is $2\pi$-periodic with respect to (the fast variable) $\phi$ and the
frequency $\omega= \mu/\epsilon$ is of order $1/\epsilon$.
By ingenious repeated application of  canonical transformations (see \cite{cole} for details, where $\mu=1$) one obtains
an averaged Hamiltonian up to the third order in $\epsilon$:
\begin{equation}\label{eq:Ham_ave}
\overline{H}(\overline x,\overline p)=\frac 12 \overline p\cdot\overline p+\overline U +\frac{\epsilon^2\mu^2}{2}\overline{V'\cdot V'}-\epsilon^3\mu\;\overline{S''V'}\,\overline p,
\end{equation}
where $\overline U(x)=\frac 1 {2\pi}\int_0^{2\pi} U(x,\tau)\;d\tau,\;V(x,\tau)=\int^{\tau} U(x,\theta)-\overline U(x)\; d\theta,$ and 
$ S(x,\tau)=\int^{\tau} V(x,\theta)\; d\theta$ with the constant of integration chosen so
that $\overline V=\overline S=0$, and where prime denotes the derivative with respect to the space variable $x$.

On the other hand, the symplectic averaging-reduction procedure applied to the Hamiltonian (\ref{eq:Ham_Par})
is as follows. Regard  the fast variable $\phi\in \mathbb T$ as a new (periodic) coordinate, while the quotient along 
$\phi$-fibers is the slow manifold with coordinates $\overline x\in \mathbb R^n$,  mean values of the solutions.
The $\mathbb T$-action on the principal $\mathbb T$-bundle $\pi:\mathbb R^n\times \mathbb T\rightarrow \mathbb R^n$
is given by $\phi\circ(\overline{x},e^{i\phi'})=(\overline{x},e^{i(\phi+\phi')})$.

\begin{theorem}\label{thm:Par_thm}
The averaged Hamiltonian system \eqref{eq:Ham_ave} for the natural system    \eqref{eq:Ham_Par}
 is equivalent to the result of the symplectic averaging-reduction
procedure, i.e. the Hamiltonian system
on the reduced symplectic manifold $(T^*\mathbb R^n,\omega_{\mu})$ with the Hamiltonian function:
\begin{equation}\label{eq:avePar}
H_{slow}(\overline{x},\overline{p})=\frac 12 \,\overline{p}\cdot\overline{p}+U_{\mu}(\overline{x}),
\end{equation}
where $(\overline{x},\overline{p})\in T^*\mathbb R^n$, and the effective potential is $U_{\mu}(\overline{x})=\overline U(\overline x)+\frac{\epsilon^2\mu^2}{2}\overline{V'\cdot V'}$. The reduced symplectic structure is given by
\begin{equation}\label{eq:aveParSym}
\omega_{\mu}=d\overline x\wedge d\overline p-\epsilon^3\mu\; d\{\overline{S''V'}^{\flat}\}
\end{equation}
with the above notations for $\overline U, V, S$ and the prime as above, and $\flat$ stands for the lifting indices operator ${\rm Vect}(\mathbb R^n)\rightarrow\Omega^1(\mathbb R^n)$ corresponding to the Euclidean metric on $\mathbb R^n$.
\end{theorem}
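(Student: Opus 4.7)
The plan is to follow the averaging-reduction template already used for the pendulum (Theorem~\ref{thm:Pen_thm2}) and the Craik--Leibovich equation (Theorem~\ref{thm:aveCL}): promote the fast phase $\phi=\omega t$ to a genuine configuration coordinate on $\mathbb T$, turning the nonautonomous system~\eqref{eq:Ham_Par} into a natural system on the principal bundle $\pi:Q=\mathbb R^n\times\mathbb T\to\mathbb R^n$ equipped with an averaged $\mathbb T$-invariant metric, and then apply Theorem~\ref{thm:ave-red}.

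First I would solve the fast motion at frozen slow position $\bar x$. Writing $x=\bar x+\tilde x$, $p=\bar p+\tilde p$ with $\overline{\tilde x}^{\mathbb T}=\overline{\tilde p}^{\mathbb T}=0$ and using $\omega=\mu/\epsilon$, the Hamilton equations for~\eqref{eq:Ham_Par} give at leading order
\[
\tilde p(\bar x,\phi)=-\epsilon^3\mu\, V'(\bar x,\phi),\qquad \tilde x(\bar x,\phi)=-\epsilon^4 S'(\bar x,\phi),
\]
the antiderivatives already appearing in Example~\ref{ex:rapid}. The oscillatory velocity $\tilde v=\tilde p$ then satisfies $\overline{\tilde v\cdot\tilde v}=\epsilon^6\mu^2\,\overline{V'\cdot V'}(\bar x)$.

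Second I would compute the averaged metric on $Q$. Following the recipe of Theorem~\ref{thm:aveCL}, I normalize the chosen fast orbit so that $(1/2\pi)\int_0^{2\pi}(\tilde v,\tilde v)\,d\phi=1/\epsilon^2$ and read off the averaged metric in the local form of Remark~\ref{rem:metric}:
\[
\overline{((v,\gamma),(v,\gamma))}^{\mathbb T}_{(\bar x,\phi)}=v\cdot v+2\gamma\,h_0(\bar x)\langle A(\bar x),v\rangle+h_0(\bar x)\gamma^2.
\]
The scalar $h_0(\bar x)$ is dictated by the normalization, and $\tfrac12\langle\mu,\mathbb I(\bar x)^{-1}\mu\rangle=\tfrac{\epsilon^2\mu^2}{2}\overline{V'\cdot V'}(\bar x)$ reproduces the extra term in the effective potential of~\eqref{eq:avePar}. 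The cross-term 1-form $A(\bar x)\in\Omega^1(\mathbb R^n)$ comes from the $\bar x$-dependence of the fast orbit: tracking the infinitesimal change of $\tilde x(\bar x,\phi)$ under variations of $\bar x$ inserts a $2\gamma\langle A,v\rangle$ contribution into the averaged kinetic energy, which after integration yields $h_0\cdot A=-\epsilon^3(\overline{S''V'})^\flat$.

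Third I would invoke Theorem~\ref{thm:ave-red} and Proposition~\ref{prop:momentum}. The averaged mechanical connection is $\bar{\mathcal A}=A(\bar x)+d\phi$, whose (abelian) curvature pulled back to $T^*\mathbb R^n$ provides the magnetic 2-form $\beta_\mu=\mu\,\pi_{\mathbb R^n}^*\,dA$. Substituting the value of $A$ above yields precisely the magnetic correction $-\epsilon^3\mu\,d\{\overline{S''V'}^\flat\}$ appearing in~\eqref{eq:aveParSym}. Combined with the effective potential from Step~2 this gives~\eqref{eq:avePar}; equivalence with the Cole form~\eqref{eq:Ham_ave} then follows from Remark~\ref{rm:Ave_natHamiltonian}, which exchanges the linear-in-momentum term $-\epsilon^3\mu\,\overline{S''V'}\,\bar p$ of~\eqref{eq:Ham_ave} for the magnetic term via the shift $\bar p\mapsto\bar p-\epsilon^3\mu\,\overline{S''V'}$.

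The main obstacle is the cross-term 1-form $A(\bar x)$. A naive averaging of the kinetic energy produces only the diagonal $h_0\gamma^2$ piece (as in the pendulum Example~\ref{ex:pendulum}), and the $\overline{S''V'}$-contribution is a genuine next-order effect coming from the parametric dependence of the fast orbit on the slow position. This is the geometric counterpart of the iterated canonical transformations required in Example~\ref{ex:rapid}: one must carry the fast-motion expansion one step further so that $\partial\tilde x/\partial\bar x$ feeds back as the cross term of the averaged mechanical connection.
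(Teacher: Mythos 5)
Your overall route is the paper's: solve the fast motion at frozen $\bar x$ to get $\tilde p\sim V'$ and $\tilde x\sim S'$, assemble from it a $\mathbb T$-invariant averaged metric on $Q=\mathbb R^n\times\mathbb T$ in the local form of Remark \ref{rem:metric}, read off the fiber inertia operator and the mechanical connection via Proposition \ref{prop:momentum}, and feed these into Theorem \ref{thm:ave-red} to obtain the effective potential and the magnetic term. The one step that would actually fail as written is the normalization in your Step 2. You import from Theorem \ref{thm:aveCL} the prescription ``normalize the fast orbit so that $(1/2\pi)\int_0^{2\pi}(\tilde v,\tilde v)\,d\phi=1/\epsilon^2$.'' In the Craik--Leibovich setting that is legitimate because the periodic fast orbit is chosen from a one-parameter family and its amplitude is a free gauge. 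Here it is not: the fast orbit is uniquely forced by $U(x,\phi)$ (you computed it yourself, $\tilde p=-\epsilon^3\mu V'$), and its $\bar x$-dependent amplitude is precisely the source of the effective potential. If you normalize the averaged fast kinetic energy to the constant $1/\epsilon^2$, then $h_0$ and hence $\mathbb I(\bar x)^{-1}$ become constant, $\tfrac12\langle\mu,\mathbb I^{-1}\mu\rangle$ is a constant, and the term $\tfrac{\epsilon^2\mu^2}{2}\overline{V'\cdot V'}(\bar x)$ disappears from the dynamics. The formula you write immediately afterwards, $\tfrac12\langle\mu,\mathbb I(\bar x)^{-1}\mu\rangle=\tfrac{\epsilon^2\mu^2}{2}\overline{V'\cdot V'}(\bar x)$, contradicts that normalization; the paper instead keeps the forced orbit and reads off $h_0(\bar x)=\bigl(\epsilon^2\,\overline{V'\cdot V'}(\bar x)\bigr)^{-1}$ directly, which is what you should do.

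Two smaller points. First, your bookkeeping of the cross term is off: in the normal form $(u,u)+2\gamma h\langle A,u\rangle+h\gamma^2$ the paper's averaged metric has cross coefficient $hA=\epsilon\,(\overline{V'\cdot V'})^{-1}\overline{S''V'}$ and $h=(\epsilon^2\overline{V'\cdot V'})^{-1}$, so the connection component is $A=+\epsilon^3\,\overline{S''V'}^{\flat}$ (your ``$h_0\cdot A=-\epsilon^3(\overline{S''V'})^{\flat}$'' confuses $A$ with $hA$ and flips the sign; with $\omega_\mu=\omega_{can}-\mu\,dA$ you need the plus sign to land on \eqref{eq:aveParSym}). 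Second, your derivation of the cross term from ``the $\bar x$-dependence of the fast orbit'' is only a sketch, but in fairness the paper's own proof also asserts the averaged metric rather than deriving the cross term in detail, and your closing observation --- that the equivalence with \eqref{eq:Ham_ave} is the momentum shift of Remark \ref{rm:Ave_natHamiltonian} trading the term $-\epsilon^3\mu\,\overline{S''V'}\,\overline p$ for the magnetic 2-form --- is exactly the intended reading and is left implicit in the paper.
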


\begin{proof}
For the fast time $\tau=t/\epsilon$ let $(\widetilde{x}(\overline{x},\tau,\widetilde{p}(\overline{x},\tau)))$ be a solution of the Hamiltonian system corresponding to the function
$$
 \widetilde{H}(\widetilde{x},\widetilde{p})=\frac 12 \,\widetilde{p}\cdot\widetilde{p}+\widetilde U(\overline x+\widetilde x,\tau)
$$
and satisfying the periodicity and zero mean requirement in $\tau$. (Recall that one splits any solution 
$x=\overline{x}+\widetilde{x}$  into the mean and periodic parts.)
Upon discarding higher order terms in $\epsilon$, the Newton's equation on $\widetilde{x}$ becomes
$$
 d^2\widetilde{x}/dt^2=-\widetilde U'(\overline x+\widetilde x,\tau),
$$
where the prime stands for the derivative with respect to the space variable.
We consecutively obtain
$$
d\widetilde{x}/dt=-\epsilon\int^\tau U'(\overline x+\widetilde x,\tau)\; d\tau=-\epsilon V'
\qquad{\rm and }\qquad
\widetilde{x}=-\epsilon\int^\tau \epsilon\, V'(\overline x+\widetilde x,\tau)\; d\tau=-\epsilon^2 S'\,,
$$
where one integrates by using the zero mean condition on the constants of integration.

Then the averaged metric is given by
$$
((v,\gamma),(v,\gamma))_{(\overline{x},\phi)}=v\cdot v
+2\,\epsilon\gamma\,\langle \overline{S''V'},v\rangle\,(\overline{V'\cdot V'})^{-1}
+ {\gamma^2}(\epsilon^2\overline{V'\cdot V'})^{-1},
$$
where $(v,\gamma)\in T_{\overline{x}} \mathbb R^n\times T_{\phi}\mathbb T\simeq  T_{\overline{x}}\mathbb R^n\times \mathbb R$. Note that the above metric has the form \eqref{eq:aveMet}
discussed in Remark \ref{rem:metric}.

\smallskip

The corresponding fiber inertia operator 
$\mathbb{I}(\overline{x}):\mathfrak t=\mathbb{R}\rightarrow\mathfrak t^*=\mathbb{R},\;\overline{x}\in M$  
is given by $\mathbb{I}(\overline{x})\gamma={\gamma}(\epsilon^2\overline{V'\cdot V'})^{-1}$ for  $\gamma\in\mathbb{R}$.
Then, as follows from Proposition \ref{prop:momentum},
the momentum map $J:T(\mathbb R^n\times \mathbb T)\rightarrow\mathfrak t^*=\mathbb{R}$ 
for $Q=\mathbb R^n\times \mathbb T$ is given by
$J(\bar x,\phi,v,\gamma)=\gamma(\epsilon^2\overline{V'\cdot V'})^{-1}+\epsilon \langle \overline{S''V'},v\rangle\,(\overline{V'\cdot V'})^{-1},$   where $v$ is the image of $p$ under the metric identification.

\smallskip

Finally, the averaged connection $\bar{\mathcal A}\in\Omega^1(\mathbb R^n\times \mathbb T,\mathbb R)$ on the principal $\mathbb T$-bundle $\mathbb R^n\times \mathbb T$ is given by
$\bar{\mathcal A}(\bar x,v,\phi,\gamma)=d\phi+\epsilon^3 \overline{S''V'}^{\flat}.$

We choose $\mu$ to be the value of the momentum map. By Theorem \ref{thm:ave-red}, the reduced symplectic structure on the reduced manifold
$T^*\mathbb R^n=J^{-1}(\mu)/\mathbb T$ has a magnetic term
$$
\omega_{\mu}=d\overline x\wedge d\overline p-\epsilon^3\mu\, d\{\overline{S''V'}d\overline x\}.
$$
The reduced Hamiltonian function on $(T^*\mathbb R^n,\omega_\mu)$ turns out to be
$$
H_{slow}(\overline{x},\overline{p})=\frac 12\, \overline{p}\cdot\overline{p}+U_{\mu}(\overline{x})
$$
at any  $(\overline{x},\overline{p})\in T^*M$,  where the effective potential is
$U_{\mu}(\overline{x})=\overline U(\overline x)+\frac{\epsilon^2\mu^2}{2}\overline{V'\cdot V'}$\,.
\end{proof}

Thus the system obtained by applying the symplectic averaging-reduction procedure is equivalent to the one obtained via the classical averaging method in \cite{cole}.

\begin{remark}
It remains an open question to describe the magnetic term related to the curvature of an appropriate 
$\mathbb T$-bundle in purely geometric terms, similar to the gyroscope description in \cite{cole2}.
\end{remark}

\medskip

\end{document}